
{\let~\catcode~13 13\def^^M{^^J}~` 12\xdef\asciiart{

   88b           d88                                  88    88                  
   888b         d888                                  88    ""                  
   88`8b       d8'88                                  88                        
   88 `8b     d8' 88     ,adPPYba,     88       88    88    88    8b,dPPYba,    
   88  `8b   d8'  88    a8"     "8a    88       88    88    88    88P'   `"8a   
   88   `8b d8'   88    8b       d8    88       88    88    88    88       88   
   88    `888'    88    "8a,   ,a8"    "8a,   ,a88    88    88    88       88   
   88     `8'     88     `"YbbdP"'      `"YbbdP'Y8    88    88    88       88

         db        88                        88                                 
        d88b       88                        88                                 
       d8'`8b      88                        88                                 
      d8'  `8b     88  ,adPPYb,d8  ,adPPYba, 88,dPPYba,  8b,dPPYba, ,adPPYYba,  
     d8YaaaaY8b    88 a8"    `Y88 a8P_____88 88P'    "8a 88P'   "Y8 ""     `Y8  
    d8""""""""8b   88 8b       88 8PP""""""" 88       d8 88         ,adPPPPP88  
   d8'        `8b  88 "8a,   ,d88 "8b,   ,aa 88b,   ,a8" 88         88,    ,88  
  d8'          `8b 88  `"YbbdP"Y8  `"Ybbd8"' 8Y"Ybbd8"'  88         `"8bbdP"Y8  
                       aa,    ,88                                               
                        "Y8bbdP"

}}\makeatletter\newif\iflabor%
\labortrue

\documentclass{amsart}

\usepackage{mathtools,amssymb}\allowdisplaybreaks

\usepackage[english]{babel}

\usepackage[utf8]{inputenc}
	\let\@DUC\DeclareUnicodeCharacter\@DUC{4EE4}{\@DO\@LET}
	\def\@DO#1{\bgroup\def\UTFviii@defined##1{\expandafter#1\string##1+}}
	\def\@LET#1:#2+{\egroup\@DUC{\UTFviii@hexnumber{\decode@UTFviii#2\relax}}}
	令定{\@DO\@DEF}\def\@DEF#1:#2+{\@LET:#2+{\@nameuse{ME#2AN}}\@namedef{ME#2AN}}
	
	令Γ\Gamma	令Δ\Delta	令Θ\Theta	令Λ\Lambda	令Ξ\Xi		令Π\Pi		
	令Σ\Sigma	令Υ\Upsilon	令Φ\Phi		令Ψ\Psi		令Ω\Omega	
	令α\alpha	令β\beta		令γ\gamma	令δ\delta	令ε\varepsilon			
	令ζ\zeta		令η\eta		令θ\theta	令ι\iota		令κ\kappa	令λ\lambda	
	令μ\mu		令ν\nu		令ξ\xi		令π\pi		令ρ\rho		令ς\varsigma	
	令σ\sigma	令τ\tau		令υ\upsilon	令φ\varphi	令χ\chi		令ψ\psi		
	令ω\omega	令ϑ\vartheta	令ϕ\phi		令ϖ\varpi	令Ϝ\Digamma	令ϝ\digamma	
	令κ\varkappa	令ϱ\varrho	令ϴ\varTheta	令ϵ\epsilon	
	令𝔽{\mathbb F}	
	令𝒜{\mathcal A}	令ℬ{\mathcal B}	令ℋ{\mathcal H}	令ℐ{\mathcal I}	
	令ℳ{\mathcal M}	
	\DeclareMathAlphabet\mathsi{T1}\sfdefault\mddefault\sldefault
	令𝘊{\mathsi C}		令𝘋{\mathsi D}		令𝘔{\mathsi M}		令𝘙{\mathsi R}	
	令𝘚{\mathsi S}		令𝘞{\mathsi W}	
	令ℓ\ell		令∂\partial	令∇\nabla	
	令√\sqrt		
	\def\bigol#1{\bigl#1\iflabor\else\color{UCO!50!black}\fi}
	\def\bigor#1{\iflabor\else\color{UIB!50!black}\fi\bigr#1}
	\def\({\bigol(}	\def\){\bigor)}	令（{\Bigl(}		令）{\Bigr)}		
	令［{\bigol[}	令］{\bigor]}	令「{\Bigl[}		令」{\Bigr]}		
	令｛{\bigol\{}	令｝{\bigor\}}	令『{\Bigl\{}	令』{\Bigr\}}	
	令⌈\lceil	令⌉\rceil	令⌊\lfloor	令⌋\rfloor	令⟨\langle	令⟩\rangle	
	\def\|{\mathrel\Vert}	令‖{\mathrel\Big\Vert}	令｜{\mid\nobreak}		
	令；{\mathrel;\nobreak}	令、{\setminus\nobreak}	令：{\colon}				
	令ˆ\hat		令˜\tilde	令¯\bar		令˘\breve	令˙\dot		令¨\ddot		
	令°\ocirc	令ˇ{^\vee}	
	令∏\prod		令∑\sum		令∫\int		令⋀\bigwedge	令⋁\bigvee	令⋂\bigcap	
	令⋃\bigcup	令⨁\bigoplus			令⨂\bigotimes			
	令±\pm		令·\cdot		令×\times	令÷\frac		令•\bullet	令∧\wedge	
	令∨\vee		令∩\cap		令∪\cup		令⊕\oplus	令⊗\otimes	令⋆\star
	令¬\neg		令∀\forall	令∃\exists	令∞\infty	令⊤\top		令⊥\bot		
	令⋯\cdots	令♠\spadesuit			令♡\heartsuit			
	令♢\diamondsuit			令♣\clubsuit	令♭\flat		令♮\natural	令♯\sharp	
	令←\gets		令→\to		令↞\twoheadleftarrow		令↠\twoheadrightarrow	
	令↤\mapsfrom	令↦\mapsto	令↩\hookleftarrow		令↪\hookrightarrow		
	令↾{\mathbin\upharpoonright}			
	令∈\in		令∉\notin	令∋\ni		令≅\cong		令≈\approx	令≔\coloneqq	
	令≠\neq		令≡\equiv	令≤\leqslant	令≥\geqslant	令⊆\subseteq	令⟂\perp		
	令⟵\longleftarrow		令⟶\longrightarrow		令⟼\longmapsto			
	令…{,\allowbreak\dotsc,\allowbreak}
	定†#1†{\text{#1}}
	定©#1©{{\color{UCO}#1}}
	定®#1®{{\color{Periwinkle}#1}}
	令🔑{^\star_}
	\def\[{\@ifstar{\begin{equation*}}{\begin{equation}}}
	\def\]{\@ifstar{\end  {equation*}}{\end  {equation}}}
	
	\DeclarePairedDelimiter\abs\lvert\rvert

	\DeclareMathOperator\spa{span}		
	\def\ce{_{\text{ce}}}				\def\co{_{\text{co}}}
	\def\V#1{T^{#1}V}	\def\Vp{\V p}	\def\W#1{Λ^{#1}W}	\def\Wq{\W q}
	\def\Ws{W^\star_{[k]}}
	
	\def\biy#1#2{\Bigl(\hbox{\smaller\!$\genfrac..\z@0{#1}{#2}$\!}\Bigr)}
	\def\biz#1#2{\Bigl(\hbox{\smaller[2]\!$\genfrac..\z@0{#1}{#2}$\!}\Bigr)}
	\let\bi=\biy
	\def\WM/{wedge-mul\-ti\-pli\-ca\-tion}
	\def\CM/{co\-wedge-mul\-ti\-pli\-ca\-tion}
	\def\coef/{co\-ef\-fi\-cient}
	\def\MBR/{M\kern-.2ex\lower.5ex\hbox{B}\kern-.2exR}
	\def\MSR/{M\kern-.2ex\lower.5ex\hbox{S}\kern-.2exR}
	\def\MDS/{M\kern-.2ex\lower.5ex\hbox{D}\kern-.2exS}
	定彈#1{\advance#1\glueexpr0ptplus1ptminus1pt}
	定邊#1{\marginpar{\color{2728 C}\flushleft #1}}
	令縮{\hskip-9cmplus9cm}

\usepackage{tikz-cd,pgfplotstable,booktabs,colortbl}
	\usetikzlibrary{calc,shapes.geometric}
			
	\let\PMT\pgfmathtruncatemacro	
	定色#1!#2 {\definecolor{#1}{HTML}{#2}}
	色UIB!13294b			色2728 C!0455A4		色2738 C!1F4096		
	色427!E8E9EA			色Cool Gray 6!A5A8AA	色Cool Gray 1!5E6669	
	色UCO!E84A27			色UIC red!D50032		色UIS blue!003366	
	色Teal!0d605e		色Gray-blue!6fafc7	色Citron!bfd46d		
	色Dark yellow!ffd125	色Salmon!ee5e5e		色Periwinkle!4f6898	
	定點(#1)[#2]{node(d#1)[circle,fill,inner sep=1]{}node(D#1)[anchor=#2]}
	定籤(#1)--+(#2:#3){(#1.#2)--+(#2:#3)node[anchor=#2+180]}
	定苯{node[regular polygon,regular polygon sides=6,inner sep=5,draw]}
	\tikzset{
		every picture/.style={cap=round,join=round},
		c/.tip={tikzcd right hook},
		thick dash/.style={dashed,line width=.8},
		shorten both/.style={shorten <=#1,shorten >=#1},
		sudoku/.style={xscale=1.5,yscale=3/4},
		subspace cell/.style={rounded corners=6,/to highlight/\x-\y/.try},
		know/.style={/to highlight/#1/.style={fill=Dark yellow}},
		learn/.style={/to highlight/#1/.style=draw},
	}
	\tikzcdset{
		skew 3/.style={row sep=1em,column sep=#1,
			cells={xshift=-\pgfmatrixcurrentrow*1em}},
		parity/.style={row sep=2em,column sep=1em},
		parity 2/.style={row sep=2em,column sep={between origins,3em}},
	}
	\def\CD{\catcode`\&=13\relax\CD@AUX}
	\newcommand\CD@AUX[2][]{\begin{tikzcd}[#1]#2\end{tikzcd}}
	\def\ARDS{\uar[xshift=.5em,<-c]\uar[xshift=-.5em,->>]
		\dar[xshift=.5em,->>]\dar[xshift=-.5em,<-c]}
	\newcommand\ARCM[1][]{\ar[start anchor={[xshift=-1em]south east},
		end anchor={[xshift=1.5em]north west}]{ddr}[pos=.25,#1]{∇}}
	\def\VW#1#2{\V{#1}⊗\W{#2}\ARCM}
	\def\VUW#1#2{\V{#1}⊗U⊗\W{#2}\ARDS}
	\def\VWW#1#2{\V{#1}⊗W⊗\W{#2}}
	\pgfplotsset{compat/show suggested version=false,compat=1.13} 
	\pgfmathsetseed{\day+31*(\month+12*(\year-2000))}

\usepackage[unicode,pdfusetitle,
	linkbordercolor=Salmon,citebordercolor=Teal]{hyperref}
	\def\PM#1$#2${\texorpdfstring{$#2$}{#1}}
	\def\PT#1†#2†{\texorpdfstring{#2}{#1}}
	\def\U#1+{\unichar{"#1}}

\usepackage[capitalize,noabbrev]{cleveref}
	
	定名#1:#2?#3?#4 {\crefname{#1}{#2#3}{#2#4}}		
	名page:page??s			名section:section??s		名enumi:item??s
	\newtheorem{thm}{Theorem}名thm:Theorem??s
	定理#1:#2?#3?#4 {\newtheorem{#1}[thm]{#2#3}名#1:#2?#3?#4 }令風\theoremstyle
	理cor:Corollar?y?ies		理lem:Lemma??s			理pro:Proposition??s		
	風{definition}			理dfn:Definition??s		理exa:Example??s			
	風{remark}				理cla:Claim??s			理rem:Remark??s			
	定式#1:#2?#3?#4 {名#1:#2?#3?#4 \creflabelformat{#1}{\textup{(##2##1##3)}}}
	式equ:equalit?y?ies		式con:containment??s		式dia:diagram??s			
	式for:formula??s			式fun:ogf??s				式ine:inequalit?y?ies	
	式spa:space??s			式ten:tensor??s			
	\def\@ReadTypeLabel#1:#2?{\xdef\@TYPE{#1}\xdef\@LABEL{#2}}
	\def\eqlabel#1{\@ReadTypeLabel#1?\label[\@TYPE]{\@TYPE:\@LABEL}}
	\def\steplabel{\incr@eqnum\tag\theequation\eqlabel}
	\newcommand\taglabel[2][0]{\@ReadTypeLabel#2?\label[\@TYPE]{\@TYPE:\@LABEL}
		\addtocounter{equation}{#1}\tag{\theequation.\@LABEL}}
	\def\bytag#1{\tag{by \eqref{#1}}}	\def\copytag#1{\tag{\eqref{#1}'s copy}}
	\def\itlabel#1{\hypertarget{item:#1}{(#1)}}
	\def\itref#1{\textup{(\hyperlink{item:#1}{#1})}}

定行{\catcode13 }行13定讀#1{行13\def^^M{\\}\def\@ATTRI{行5#1}\@ifnextchar[{選}{參}}
定選[^^M#1^^M]^^M#2^^M^^M{\@ATTRI[#1]{#2}}定參^^M#1^^M^^M{\@ATTRI{#1}}行5

讀
\title
				   Multilinear Algebra for Distributed Storage  				

讀
\author
				    Iwan Duursma\and Xiao Li\and Hsin-Po Wang   				

讀{\def
\@pdfsubject}
				         94B27; 15A75; math.IT; math.AC         				

讀
\subjclass
				         Primary 94B27; Secondary 15A75         				

讀
\thanks
		   This work was partially supported by NSF grant CCF-1619189.  		

\address{
				           Department of Mathematics,           				
				  University of Illinois at Urbana--Champaign,  				
				             Urbana, Illinois 61801             				
}
\email{%
				 duursma and xiaoli17 and hpwang2 @illinois.edu 				
}

\begin{document}\message{\asciiart}

\begin{abstract}
	An $(n, k, d, \alpha, \beta, M)$-ERRC (exact-repair regenerating code)
	is a collection of $n$ nodes used to store a file.
	For a file of total size $M$,
	each node stores $\alpha$ symbols, any $k$ nodes recover the file, and
	any $d$ nodes repair any other node via sending out $\beta$ symbols.
	We establish a multilinear algebra foundation to assemble
	$(n, k, d, \alpha, \beta, M)$-ERRCs for all meaningful $(n, k, d)$ tuples.
	Our ERRCs tie the $\alpha/M$-versus-$\beta/M$ trade-off with
	cascade codes, the best known construction for this trade-off.
	We give directions on how these ERRCs repair multiple failures.
\end{abstract}

彈\baselineskip/8 彈\lineskip/8 彈\parskip/8 彈\floatsep/4 彈\textfloatsep/4
\hbadness99\overfullrule1em

\maketitle

讀
\section
								  Introduction  								

\label{sec:intro}

	Distributed storage systems emerge as a nontraditional coding problem
	where the user gains and loses by multiples of
	a chunk of symbols called \emph{node}.
	The user wants to decode the original message
	by connecting to (only) a fraction of nodes.
	In addition, nodes themselves want to actively
	check for spontaneous erasures and refill them
	before the user asks a failing node for data.
	This motivates the following definition.
	
	\begin{dfn}\label{dfn:regenerate}
		\cite{DGWWR10,WD09,RSKR09}
		An \emph{$(n,k,d,α,β,M)$-ERRC} (exact-repair regenerating code)
		is a collection of $n$ nodes used to store an $M$-symbol file.
		The storage is configured such that
		(a)	each node stores $α$ symbols;
		(b)	any $k$ nodes contain sufficient information
			to recover the file; and
		(c)	any $d$ nodes repair any other failing node
			by sending out $β$ symbols.
	\end{dfn}
	
	In terms of random variables and entropies
	\cite[(4)--(6)]{Duursma14} \cite[Definition~1]{Tian14}:
	A file $Φ$ is a (random) vector in $𝔽^M$,
	where $𝔽$ is the working alphabet.
	Let nodes be indexed by integers $[n]≔\{1,2…n\}$.
	For each $h∈[n]$,
	the $h$th node stores a vector $𝘞_h∈𝔽^α$ depending on $Φ$.
	That is, $H(𝘞_h｜Φ)=0$ for all $h∈[n]$.
	The contents of any $k$ nodes recover the file $Φ$ in the manner that
	\[H(Φ｜𝘞_{h_1},𝘞_{h_2}…𝘞_{h_k})=0\eqlabel{equ:k}\]
	for arbitrary distinct indices $h_1,h_2…h_k∈[n]$.
	The actual procedure that recovers $Φ$ from $𝘞_{h_1},𝘞_{h_2}…𝘞_{h_k}$ is
	called the \emph{downloading scheme} or the \emph{data recovery scenario}.

	Nodes wear out.
	When the $f$th node fails for some $f∈[n]$,
	the physical being of the $f$th node is disconnected and discarded.
	An empty, brand new replacement will be plugged into the system;
	it is called a \emph{newcomer}.
	A subset $ℋ⊆[n]、\{f\}$ of $d$ nodes will be asked to help
	reproduce the data on the newcomer.
	To that end, a helper node of index $h∈ℋ$
	sends out a helping vector $𝘚^ℋ_{h→f}∈𝔽^β$.
	In formal language, $H(𝘚^ℋ_{h→f}｜𝘞_h)=0$
	for all $f∈[n]$ and all $h∈ℋ⊆[n]、\{f\}$.
	The help messages contain sufficient information
	to repair the $f$th node in the manner that
	\[H(𝘞_f｜𝘚^ℋ_{h_1→f},𝘚^ℋ_{h_2→f}…𝘚^ℋ_{h_d→f})=0\eqlabel{equ:d}\]
	for arbitrary distinct indices $f,h_1,h_2…h_d∈[n]$ and $ℋ≔\{h_1,h_2…h_d\}$.
	The actual procedure that recovers $𝘞_f$
	from $𝘚^ℋ_{h_1→f},𝘚^ℋ_{h_2→f}…𝘚^ℋ_{h_d→f}$ is called
	the \emph{repairing scheme} or the \emph{node repairing scenario}.
	
	This definition immediately poses a dilemma.
	In order to store files more efficiently,
	node contents should share very little mutual information.
	But then, repairing worn-out nodes becomes more difficult as it is hard
	to find relations among vectors sharing little mutual information.
	The parameter $β$ is referred to as the \emph{repair bandwidth}
	as it represents the required bandwidth of the network
	(from a helper to the newcomer) when the failing node
	needs to be reconstructed within a time limit.
	Another interpretation is that $dβ/α$ is the average length of
	the parity check equations used to reconstruct symbols in the $f$th node.
	
	\cite{DGWR07b,DGWR07n} proposed the earliest prototype
	of \cref{dfn:regenerate} and addressed the dilemma.
	They used network coding techniques to pinpoint
	the minimally required $β$ when $M=kα$---in other words,
	when \cref{equ:k} is achieved without overhead.
	This regime is later referred to as the \emph{\MSR/ point}.
	They also pinpoint the minimally required $α$ when $α=dβ$%
	---when \cref{equ:d} is achieved without overhead.
	And this is called the \emph{\MBR/ point}.
	
	Later in \cite{DGWWR10},
	a family of trade-offs between $α$ and $β$ was posed.
	They first defined what is now called \emph{FRRC}
	(functional-repair regenerating code)
	that is more general than \cref{dfn:regenerate}.
	In an FRRC, the newcomer node does not store
	the exact same content as the failing being used to.
	Instead, the newcomer will store whatever that is appropriate
	for the system  to sustain
	(to maintain its functionality as a device that stores the file $Φ$).
	The trade-off is best explained by \cref{fig:433}:
	Fix, say, $(n,k,d)=(4,3,3)$.
	Every triple $(α,β,M)$ such that the homogeneous pair $(α/M,β/M)$ lies
	strictly below the solid segments violates some information-theoretic
	inequalities, and hence is unfeasible. 
	Similar segments are identified for all $(n,k,d)$ triples.
	On the other hand, every (rational) point (inclusively) above
	the solid segments is achievable by some $(n,k,d,α,β,M)$-FRRC
	using the network coding techniques.
	
	Following the initial result, \cite{Wu10} gave an lower bound
	on the size of the field over which the codes can be implemented.
	They also relaxed the restriction that the FRRCs found in \cite{DGWWR10}
	only survive a prescribed number of failing--repairing rounds.
	Put in another way, \cite{Wu10}'s FRRCs can survive
	an unbounded number of failures,
	as long as every failure takes place after the previous failure is fixed.
	
	\begin{figure}
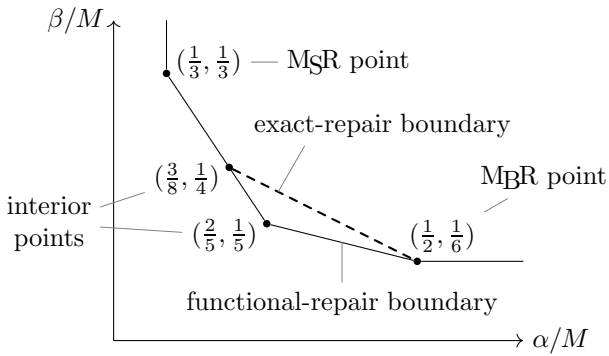
%
	\labortrue
		$$\tikz[x=20cm,y=15cm]{
			\def\ymin{1/6}\def\ymax{1/3}\def\xmin{1/3}\def\xmax{1/2}
			\draw[<->]
				($(\xmin,\ymax)+(-2em,2em)$)coordinate(Y)node[left]{$β/M$}|-
				($(\xmax,\ymin)+(4em,-3em)$)coordinate(X)node[right]{$α/M$};
			\iflabor
			\draw
				(1/3,1/3)點(1)[195]{$(÷13,÷13)$}
				(3/8,1/4)點(2)[15]{$(÷38,÷14)$}
				(2/5,1/5)點(3)[15]{$(÷25,÷15)$}
				(1/2,1/6)點(4)[225]{$(÷12,÷16)$}
				(Y-|d1)--(d1)--(d3)--coordinate[pos=.5](e34)(d4)--(d4-|X);
			\draw[thick dash](d2)--coordinate[pos=.25](e3)(d4);
			\draw[help lines,nodes=black]
				籤(D1)--+(0:1em){\MSR/ point}
				籤(e3)--+(60:2em)[anchor=180+15]{exact-repair boundary}
				籤(D4)--+(30:1em){\MBR/ point}
				籤(e34)--+(-90:1.5em){functional-repair boundary}
				(Y|-d3)+(-.5em,0)node(IP)[left,align=center]{interior\\points}
				(D2)--(IP)--(D3);
			\fi
		}$$
		\caption{
			The $α/M$-versus-$β/M$ trade-off for $(k,d)=(3,3)$.
			Point $(2/5,1/5)$ is achieved by some FRRC
			(functional-repair regenerating code).
			The other marked points are achieved by some ERRCs
			(exact-repair regenerating code).
			(Note that neither axis starts from~$0$.)
		}\label{fig:433}
	\end{figure}
	
	The notion of ERRC (especially the exact repair part)
	dated back to \cite{WD09,RSKR09} who focused on \MSR/ and \MBR/ points.
	Whether or not ERRCs achieve the functional repair segments was unsettled
	until \cite{Tian14} came with a negative answer.
	They first turned this into a linear programming problem
	and let computers solve it.
	The output implies that, as in \cref{fig:433},
	points below the dashed segments are unfeasible.
	Their proof is dedicated to the $(n,k,d)=(4,3,3)$ case but it was
	widely believed that this phenomenon persists for general parameters.
	Confirmed by \cite{SSK14} is that there is always a gap
	between ERRCs and the functional repairing boundary.
	Although it is unclear how large the gap exactly is.
	
	Afterward, more bounds on the infeasibility side are found and refined.
	To name a few, \cite{Duursma14,
		PK15, 
		Tian15,SPKVSK16, 
		EMT15, 
		MT15, 
		LL16, 
		Duursma19}. 
	Meanwhile, there are works devoted to constructing new ERRCs
	to approach the infeasibility bound from the other side.
	See \cite{
		RSK11, 
		SRKR12d,SRKR12i, 
		TSAVK15, 
		SSK15, 
		GEC14, 
		EM16d,EM19c,
		DL19}. 
	There are other works that concentrate exclusively at
	the \MSR/ point and pursue additional properties
	such as optimal access and low sub-packetization level.
	See, for example, \cite{
	GLJ18, 
	VRPKLSKBYNHN18, 
	YB19, 
	CYB20}. 
	
	Up until now, Elyasi--Mohajer's \emph{cascade codes}
	in \cite{EM19c} have achieved the best-known $(α/M,β/M)$-pairs
	across all meaningful parameters (i.e., $n-1≥d≥k≥1$).
	This work is based on their earlier work on determinant codes \cite{EM16d}:
	One concatenates several copies of determinant codes
	of various parameters such that some overprotected fragments
	yield their redundancies to another insecure fragment.
	This parents-protecting-child relationship is nested
	such that an insecure fragment might become
	overprotected after receiving redundancies;
	it then has to yield the extra redundancies to its successors.
	Playing with the family tree, Elyasi--Mohajer came up with a general rule
	of how to redistribute redundancies and cascade coding is born.
	
	Our works on this topic went parallel to Elyasi--Mohajer.
	We first made a connection between layered code and determinant code
	in terms of multilinear algebra in a conference talk \cite{LD17}.
	After cascade coding went out, we generalized
	our algebra-aided ERRC to mimic the concatenating nature
	of cascade using what we called \emph{\CM/}.
	(And this work ended up unpublished.)
	At the same time, we were inspired by the layered coding \cite{TSAVK15}
	and the improved layered coding \cite{SSK15} and proposed the
	Johnson graph codes \cite{DL19} with a combinatorial flavor.
	Lastly, we invent a completely new family of ERRCs
	that does not look like anything above and present it in this paper.
	Further connections are pointed out in \cref{sec:connection}.
	
	The ERRCs to be presented in this work attain
	the same set of parameters as cascade codes do.
	With the tensor and wedge notations we are able to
	largely reduce exceptional treatments of edge cases,
	endless bookkeeping of indices, and the lexicographical ordering.
	With generating functions we lessen the complexity of enumeration problems.
	
	We now state the main theorem.
	
	\begin{thm}[main theorem]\label{thm:moulin}
		For any integers $n$, $k$, $d$, and $s$ such that $n-1≥d≥k≥s-1≥1$,
		there exists an $(n,k,d,α,β,M)$-ERRC with parameters
		\begin{align}
			α &= ∑_{p+q=s-1}(d-k)^p\bi kq,\eqlabel{for:alpha}\\
			β &= ∑_{p+q=s-2}(d-k)^p\bi{k-1}q,
				\rlap{\qquad and}\eqlabel{for:beta}\\
			M &= ∑_{p+q=s-1}d(d-k)^p\bi kq-∑_{p+q=s}(d-k)^p\bi kq
				\eqlabel{for:em}
		\end{align}
		over every field of size $n$ or greater.
		Here $p,q≥0$.
	\end{thm}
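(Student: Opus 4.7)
The plan is to build the code from multilinear algebra, as foreshadowed by the preamble's macros (tensor powers $T^p V$, exterior powers $\Lambda^q W$, and the operations \WM/ and \CM/). Fix a $(d-k)$-dimensional space $V$ and a $k$-dimensional space $W$ sitting inside a common $d$-dimensional ambient space $U$ via a short exact sequence $0 \to V \hookrightarrow U \twoheadrightarrow W \to 0$. Pick vectors $u_1, u_2, \ldots, u_n \in U$ in generic position so that any $d$ of them form a basis of $U$; such a choice is possible because $|\mathbb{F}| \geq n$. Each node $h \in [n]$ will store a vector in $\mathcal{N} \coloneqq \bigoplus_{p+q=s-1} T^p V \otimes \Lambda^q W$, whose dimension is precisely the $\alpha$ of \cref{for:alpha}.

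The file will be a vector of dimension $M$ living in a canonically defined subspace (or quotient) of $\bigoplus_{p+q=s-1} T^p U \otimes \Lambda^q W$, cut out by \CM/-type relations; the dimension count will match \cref{for:em} after decomposing each $T^p U$ along the filtration induced by the short exact sequence and applying a generating-function identity. The encoding map contracts each appearance of $U$ against $u_h$, decomposed via $U = V \oplus W$, to land in $\mathcal{N}$. Data collection \cref{equ:k} will follow from showing that, when the $k$ vectors $u_{h_1}, \ldots, u_{h_k}$ are in generic position, the multi-evaluation map from file to $k$ node contents is injective; this should reduce to Koszul-type exactness of the \WM/--\CM/ sequences $T^p V \otimes U \otimes \Lambda^q W \to T^p V \otimes \Lambda^{q+1} W$ and their companions coming from the macro setup $\V p \otimes U \otimes \W q$. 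Repair \cref{equ:d} will follow from letting each helper $h$ send the \CM/-contraction of its own content against $u_f$---a vector of length $\beta$ by \cref{for:beta}---and checking that the $d$ helpers' contributions together span the full $\alpha$-dimensional content of node $f$, because $u_f$ together with any $d-1$ helper vectors span $U$.

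The main technical obstacle will be verifying the rank assertions on the encoding, collection, and repair maps. Both data-collection injectivity and repair surjectivity ultimately reduce to exactness of multilinear-algebra complexes indexed by the bidegree $(p,q)$ with $p+q = s-1$, and I expect the proof to proceed either by induction on $s$, peeling off the boundary summands $T^{s-1} V$ and $\Lambda^{s-1} W$ at each step, or by a diagram chase using the maps induced by the short exact sequence $V \hookrightarrow U \twoheadrightarrow W$. The generating-function bookkeeping advertised in the introduction should reduce each dimension match to an identity in formal power series in $(d-k)$ and $\binom{k}{\bullet}$, and will also handle the edge cases $q = 0$, $q = k$, or $p = 0$ uniformly so that no separate case analysis is needed.
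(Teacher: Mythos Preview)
Your high-level picture is right---spaces $V,W,U$ with $\dim V=d-k$, $\dim W=k$, $\dim U=d$, star vectors $u_h\in U$ in generic position, node content indexed by $\bigoplus_{p+q=s-1}T^pV\otimes\Lambda^qW$, and generating-function bookkeeping---but two structural choices diverge from the paper in ways that matter.

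First, the file space. You put the file inside $\bigoplus_{p+q=s-1}T^pU\otimes\Lambda^qW$ and plan to ``contract each appearance of $U$ against $u_h$''. That space has $p$ copies of $U$, and contracting all of them against the single vector $u_h$ collapses the $T^pU$ factor to a scalar, which lands you in the wrong target. The paper instead takes the file to be a functional on $\bigoplus_{p+q=s-1}T^pV\otimes U\otimes\Lambda^qW$ (exactly one $U$ slot), subject to parity checks $\phi(\nu\otimes\omega)=\phi(\nabla(\nu\otimes\omega))$ coming from the cowedge map $\nabla\colon T^pV\otimes\Lambda^{q+1}W\to T^pV\otimes W\otimes\Lambda^qW$. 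Node $h$ stores the restriction of $\phi$ to the subspace where the $U$ slot is $u_h$; this is what gives the right $\alpha$. Your filtration idea on $T^pU$ does not reduce to this.

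Second, and more seriously, the repair scheme. You propose that helper $h$ sends a ``\CM/-contraction of its own content against $u_f$'' and that the $\beta$ count falls out. The paper's mechanism is different and relies on an ingredient your proposal does not contain: for each star vector $u_f=v_f+w_f$ one defines coboundary operators $\partial^V_f,\partial^W_f$ (inserting $v_f$ into the $T^pV$ slot with alternating signs, wedging $w_f$ onto the $\Lambda^qW$ slot) and sets $\partial^U_f=\partial^V_f+\partial^W_f$. Helper $h$ sends the restriction of $\phi$ to $\partial^U_f(T^pV\otimes u_h\otimes\Lambda^qW)$. The repair identity is a commutator formula
\[
\phi\bigl(\partial^U_f\nabla(\nu\otimes\omega)\bigr)-\phi\bigl(\partial^U_f(\nu\otimes\omega)\bigr)=(-1)^p\phi(\nu\otimes u_f\otimes\omega),
\]
whose proof is a cochain-homotopy computation using $(\partial^U_f)^2=0$ and the interaction of $\partial^V,\partial^W$ with $\nabla$. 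None of this is a ``Koszul exactness'' statement, and it does not follow from the spanning argument you sketch. Moreover, the bandwidth bound $\beta=\sum_{p+q=s-2}(d-k)^p\binom{k-1}{q}$ is not automatic: one must show $\sum\partial^U_f(T^pV\otimes u_h\otimes\Lambda^qW)\subseteq\sum\partial^U_f(T^pV\otimes u_h\otimes\Lambda^qW_f^{\perp})$ where $W_f^{\perp}$ is a complement of $\langle w_f\rangle$ in $W$, and this uses $(\partial^U_f)^2=0$ once more to trade a $\partial^W_f$ for a $\partial^V_f$.

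A smaller point: the star vectors need \emph{two} genericity conditions, not one. Any $d$ of them must span $U$ (for repair), and any $k$ of their images in $U/V\cong W$ must span $W$ (for download). A Reed--Solomon choice $u_h=(1,a_h,\dots,a_h^{d-1})$ with $V$ the last $d-k$ coordinates handles both and is what forces $|\mathbb{F}|\ge n$.
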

	
	We name it \emph{moulin code} inspired by
	cascade (waterfall) and \emph{mu}lti\emph{lin}ear algebra.
	
	The parameter $s$ in the theorem is an auxiliary parameter
	that reflects the \emph{size} or \emph{scale} of a code.
	The counterparts are the \emph{mode} in \cite{EM16d,EM19c}
	and the \emph{layer size} in \cite{TSAVK15,SSK15,DL19}.
	An $s=2$ code is always at the \MBR/ point;
	an $s=k+1$ code is always at the \MSR/ point.
	A larger size/scale means the code is mentally more intricate
	although $α$, $β$, or $M$ is not necessarily greater.
	There is a subjectively better way to express the parameters using
	ordinary generating functions, which is not fully exploited by \cite{EM19c}.
	
	\begin{pro}[compact parameter]\label{pro:ogf}
		In \cref{thm:moulin}, $α$, $β$, and $M$ are the $x^s$-\coef/s of
		\begin{align}
			𝒜_{k,d}(x) &≔ ÷{x(1+x)^k}{1-(d-k)x},\eqlabel{fun:alpha}\\
			ℬ_{k,d}(x) &≔ ÷{x^2(1+x)^{k-1}}{1-(d-k)x},
				\rlap{\qquad and}\eqlabel{fun:beta}\\
			ℳ_{k,d}(x) &≔ ÷{(-1+dx)(1+x)^k}{1-(d-k)x},\eqlabel{fun:em}
		\end{align}
		respectively.
	\end{pro}
	
	Before the proofs of \cref{thm:moulin,pro:ogf},
	we will give an $s=4$ example in \cref{sec:example}.
	The formal, general construction will be given in \cref{sec:general}.
	Within the \cref{sec:example,sec:general}, one shall see that
	\CM/ is pivotal to the parity checks that define the code.
	The downloading scheme then uses parity checks to propagate
	the knowledge (belief) of symbols (variables) to the entire domain,
	which helps us comprehend the file after accessing $k$ nodes.
	In addition to \CM/, our repairing scheme relies heavily on
	the \emph{coboundary operators} (to be defined in \cref{sec:tailor}).
	They get the name after algebraic topology and have vanishing squares.
	Some of their interesting properties
	are derived in \cref{sec:tailor,sec:example}.
	It is worth noting that
	\cref{lem:homotopy} has the form of a cochain homotopy.
	It is also noteworthy that \cref{lem:bandwidth}
	is essentially dealing with the dimensions of the cohomology groups.
	
	We also prepare for catastrophic scenarios:
	The next proposition reveals the cost of repairing multiple failures.
	Previously, \cite{EM19d} did a similar analysis
	that is limited to the $k=d$ case.
	
	\begin{pro}[bulk repair]\label{pro:more}
		In emergency, ERRCs constructed for \cref{thm:moulin}
		repair $c≥1$ failing nodes at once.
		In repairing, every one of $d$ helper nodes sends out $β_c$ symbols,
		where
		\[β_c=∑_{p+q=s-2}(d-k)^p（\bi k{q+1}-\bi{k-c}{q+1}）,
			\eqlabel{for:betamore}\]
		and is the $x^s$-\coef/ of
		\[ℬ_{k,d,c}(x)≔𝒜_{k,d}(x)（1-÷1{(1+x)^c}）.\eqlabel{fun:betamore}\]
		In particular, $ℬ_{k,d}(x)=ℬ_{k,d,1}(x)=𝒜_{k,d}(x)x/(1+x)$.
	\end{pro}
	
	\Cref{sec:multiple} has the proof.

\subsection{Organization}

	\Cref{sec:back} reviews some algebra background,
		especially tensor and exterior algebras.
	\Cref{sec:tailor} defines U-, V-, and W-spaces, 
		\CM/, and coboundary operators.
		It also proves handy lemmas.
	\Cref{sec:example} demonstrates moulin design with an $s=4$ example.
	\Cref{sec:general} declares general moulin construction, and verifies that
		it produces $(n,k,d,α,β,M)$-ERRCs as described in \cref{thm:moulin}.
	\Cref{sec:multiple} analyzes how moulin survives multiple node failures.

讀
\section
							    Algebra Backbone    							

\label{sec:back}

	This section serves as a self-contained introduction
	to tensor and wedge algebras that will be used in the code construction.
	Contents of this section can be found in standard textbooks.
	To skip, proceed to \cref{sec:tailor} on \cpageref{sec:tailor}.
	
	Let $𝔽$ be a field.
	We measure information in $𝔽$-symbols
	so the finiteness of $𝔽$ is not mandatory.
	However, finite fields---especially those with characteristic $2$---%
	are assumed for applications (distributed storage as in the title).
	On the other hand, a crucial part of the construction
	implies that the field must have sufficiently many elements;
	we elaborate the implication later in \cref{sec:field}.
	
	Let $U$, $V$, and $W$ be finite dimensional vector spaces over $𝔽$.
	Elements of $U$ are denoted by $u$ (with or without subscripts),
	elements of $V$ by $v$, and elements of $W$ by $w$.
	For brevity, we call vector spaces \emph{spaces}.
	
	Denoted by $Uˇ$, the \emph{dual space} of $U$ is the space
	consisting of all linear transformations from $U$ to $𝔽$.
	We call elements of $Uˇ$ \emph{functionals} to distinguish them
	from elements of $U$, which we call \emph{vectors}.
	Since $U$ is of finite dimension, $U$ and $Uˇ$ share the same dimension.
	Furthermore, $(Uˇ)ˇ$ is isomorphic to $U$ canonically---a vector $u∈U$ gives
	rise to a map from $Uˇ$ to $𝔽$ by mapping a functional $ϕ∈Uˇ$ to $ϕ(u)∈𝔽$.
	It turns out that linear transformations defined in this way
	exhaust all possible linear transformations from $Uˇ$ to $𝔽$.
	The field element $ϕ(u)∈𝔽$ is called the \emph{evaluation of $ϕ$ at~$u$}.
	The action that takes a functional $ϕ∈Uˇ$ as input and returns $ϕ(u)∈𝔽$
	is called \emph{evaluating $ϕ$ at $u$}.
	When $ϕ$ is understood from the context, we simply say
	the \emph{evaluation at $u$} and \emph{evaluating at $u$}.
	For any subspace $V⊆U$, the \emph{restriction of $ϕ$ to $V$}
	is a functional from $V$ to $𝔽$ that evaluates $v∈V⊆U$ to $ϕ(v)$.
	This restriction is denoted by $ϕ↾V$.
	The corresponding action is called \emph{restricting $ϕ$ to $V$}.
	When $ϕ$ is understood from the context, we simply say
	the \emph{restriction to $V$} and \emph{restricting to $V$}.
	
	A crucial part of our construction involves evaluations of
	a functional $ϕ∈Uˇ$ at a list of vectors $u_1,u_2,u_3,\dotsc∈U$.
	Interesting things happen when these vectors share linear relation.
	For instance, if we want to evaluate $ϕ∈Uˇ$ at $u_1$, $u_2$, and $u_1-3u_2$,
	then we can also evaluate at the first two vectors ($u_1$ and $u_2$) and
	compute the third evaluation by linearity $ϕ(u_1-3u_2)=ϕ(u_1)-3ϕ(u_2)$.
	The information content of $ϕ(u_1)$, $ϕ(u_2)$, and $ϕ(u_1-3u_2)$
	is no more than that of $ϕ(u_1)$ and $ϕ(u_2)$.
	More generally, if $V$ is a subspace of $U$ and we want to know
	the restriction $ϕ↾V$, it suffices to choose a basis of $V$
	(any basis) and evaluate at each vector in the basis.
	For all intents and purposes,
	which basis is used does not affect the properties of the codes;
	only the cardinality of the basis, $\dim(V)$, matters.
	
	Let $U⊕V$ be the \emph{direct sum} of two spaces $U$ and $V$.
	This space consists of elements of the form $(u,v)$ where $u∈U$ and $v∈V$.
	The addition is defined as $(u,v)+(u',v')≔(u+u',v+v')$
	for all $u,u'∈U$ and all $v,v'∈V$.
	The scalar multiplication is defined as $c·(u,v)≔(cu,cv)$ for any $c∈𝔽$.
	The dimension is $\dim(U⊕V)=\dim(U)+\dim(V)$.
	It is possible to define the direct sum of three spaces $U,V,W$
	by $(U⊕V)⊕W$ or $U⊕(V⊕W)$ or in any other order of preference.
	These possibilities are \emph{not} a priori the same space
	but they are all isomorphic to each other.
	It is common to unify $(U⊕V)⊕W$ and $U⊕(V⊕W)$ as $U⊕V⊕W$ and
	treat it as a space consisting of elements of the form $(u,v,w)$.
	The addition is coordinate-wise;
	the scalar multiplication is distributive.
	For the direct sum of four or more spaces, the same guideline rules.
	
	Another ``where to put parenthesis'' problem arises
	when we want to combine dual space and direct sum.
	The space $Uˇ⊕Vˇ$ is isomorphic
	to $(U⊕V)ˇ$ in a straightforward manner.
	Similar isomorphisms exist for cases with three or more factors;
	for instance $Uˇ⊕V⊕Wˇ$ is isomorphic to $(U⊕Vˇ⊕W)ˇ$.
	
	Let $ϕ∈(U⊕V⊕W)ˇ$ be a functional.
	Normally we evaluate $ϕ$ at a triple $(u,v,w)∈U⊕V⊕W$.
	When $v$ and $w$ are zero vectors, we call $ϕ(u,0,0)$
	the \emph{evaluation of $ϕ$ at $u$}.
	Put in another way, the evaluation at $u$ is done via
	treating $u$ as its canonical copy $(u,0,0)$ in $U⊕V⊕W$.
	Similarly, we can evaluate $ϕ$ at $v∈V$ by evaluating at $(0,v,0)$,
	and at $w∈W$ by evaluating at $(0,0,w)$.
	We call $ϕ↾U$ the \emph{restriction (of $ϕ$) to $U$}, which, in actuality,
	is a functional from $U$ to $𝔽$ that evaluates $u$ to $ϕ(u,0,0).$
	Restrictions $ϕ↾V$ and $ϕ↾W$ are defined likewise.
	
	Since any compounded vector $(u,v,w)$ is a sum $(u,0,0)+(0,v,0)+(0,0,w)$
	and $ϕ$ is linear, the evaluation of $ϕ$ at $(u,v,w)$
	is the sum of evaluations at $(u,0,0)$, at $(0,v,0)$ and at $(0,0,w)$.
	Colloquially, evaluations of $ϕ$ are determined by
	the evaluations at $U$, those at $V$, and those at $W$.
	More concisely, $ϕ$ is determined by $ϕ↾U$, $ϕ↾V$, and $ϕ↾W$.
	
	Direct sum of spaces generalizes to
	direct sum of mappings in the following regard.
	Let $ϕ：U→W$ be a linear transformation,
	then there is a linear transformation $Φ：V⊕U→V⊕W$
	that sends $(v,u)∈V⊕U$ to $(v,ϕ(u))∈V⊕W$.
	In other words, $Φ$ applies $ϕ$ to the designated slot,
	and leaves the other slot intact.

\subsection{Tensors and tensor products}

	Let $U$ have dimension $d$ and a basis $\{¯u_1,¯u_2…¯u_d\}$.
	Let $V$ have dimension $l$ and a basis $\{¯v_1,¯v_2…¯v_l∈V\}$.
	The \emph{tensor product of $U$ and $V$}, denoted by $U⊗V$,
	is the space that consists of formal sums of the form
	\[∑_{ij}a_{ij}¯u_i⊗¯v_j.\eqlabel{for:tensorbasis}\]
	Here $a_{ij}∈𝔽$, and each $¯u_i⊗¯v_j$ is an unbreakable,
	free variable whose sole purpose is to carry its coefficient.
	The addition is term-wise.
	\[*∑_{ij}a_{ij}¯u_i⊗¯v_j+∑_{ij}b_{ij}¯u_i⊗¯v_j
		≔∑_{ij}(a_{ij}+b_{ij})¯u_i⊗¯v_j.\]*
	The scalar multiplication is distributive.
	\[*c·∑_{ij}a_{ij}¯u_i⊗¯v_j≔∑_{ij}(ca_{ij})¯u_i⊗¯v_j.\]*
	The dimension is $\dim(U⊗V)=\dim(U)·\dim(V)=dl$.
	
	We could have put $a_{ij}$ into a $d$-by-$l$ array
	and define $U⊗V$ to be the space of arrays (matrices).
	However, doing so prevents us from seeing the greater picture:
	We can turn the character ``$⊗$'' into
	an infixed binary operator from $U⊕V$ to $U⊗V$ that sends
	\[*(u,v)=（∑_ia_i¯u_i,∑_jb_j¯v_j）∈U⊕V,\]*
	where $a_i,b_j∈𝔽$, to
	\[u⊗v≔∑_{ij}(a_ib_j)¯u_i⊗¯v_j∈U⊗V.\eqlabel{for:tensorany}\]
	This map is \emph{bi-linear} in the sense that it is linear in $u$, meaning
	\[*(u+cu')⊗v=∑_{ij}(a_ib_j+ca_i'b_j)¯u_i⊗¯v_j=u⊗v+c(u'⊗v),\]*
	and linear in $v$, meaning
	\[*u⊗(v+cv')=∑_{ij}(a_ib_j+ca_ib_j')¯u_i⊗¯v_j=u⊗v+c(u⊗v'),\]*
	but not in both,
	meaning that $(u+cu')⊗(v+cv')$ is generally not $u⊗v+cu'⊗v'$.
	(The expansion should be $u⊗v+cu⊗v'+cu'⊗v+c^2u'⊗v'$.)
	Once we give $u⊗v$---the juxtaposition of ``$⊗$'' with arbitrary vectors---%
	an interpretation, we describe an element of $U⊗V$ through summing
	a finite list of $u_i⊗v_i$, where $u_i∈U$ and $v_i∈V$ are arbitrary vectors.
	We then treat $U⊗V$ as the collection of
	sums of the form $∑_ia_iu_i⊗v_i$.
	The addition is done via adding the coefficients
	of the matched $(u_i⊗v_i)$-terms and leaving unmatched terms intact.
	For example, $(2u_1⊗v_1+u_2⊗7v_2)$ plus $(-u_2⊗v_2+u_3⊗8v_3)$
	is equal to $(2u_1⊗v_1+6u_2⊗v_2+8u_3⊗v_3)$.
	In this new syntax, $u_i$ (or $v_i$) might not form the same basis
	of $U$ (or of $V$) as those $¯u_i,¯v_j$ in \cref{for:tensorbasis} do;
	they might not form a basis at all.
	A corollary is that, no matter which particular basis
	we choose in \cref{for:tensorbasis}, we will end up defining
	the one vector space structure on $U⊗V$, up to isomorphism.
	
	We call an element of $U⊗V$ a \emph{tensor} to distinguish it
	from \emph{vectors}, which are elements of plainer spaces like $U,V,W$.
	The fact that $-u_1⊗v_1-u_2⊗v_2+u_1⊗v_2+u_2⊗v_1$ and
	$u_2⊗(-v_2+v_1)-u_1⊗(v_1-v_2)$ along with $(-u_1+u_2)⊗v_1+(u_1-u_2)⊗v_2$ as
	well as $(u_2-u_1)⊗(v_1-v_2)$ describe the same tensor inspires a question,
	What is the least amount of ``$⊗$'' required to describe a tensor?
	In a tensor product of two spaces,
	this question boils down to decomposing a matrix $[a_{ij}]_{ij}$ into
	a product $𝘊𝘙$ of a $d$-by-$r$ matrix $𝘊$ and an $r$-by-$l$ matrix $𝘙$
	with the least possible $r$.
	(Remark: when $r$ reaches the minimum,
		columns of $𝘊$ are a basis of the column space of $[a_{ij}]_{ij}$;
		rows of $𝘙$ are a basis of the row space.)
	The number $r$ is called the \emph{rank of a tensor},
	which resembles the rank of a matrix.
	When $r=1$, the tensor is of the form $au⊗v$ for $a∈𝔽$ and $(u,v)∈U⊕V$.
	This is called a \emph{rank-$1$ tensor} or a \emph{simple tensor}.
	
	The new tensor notation defined in \cref{for:tensorany}
	possesses more convenience than \cref{for:tensorbasis} does.
	Consider again the tensor product $U⊗V$.
	We interpret $u⊗V$ as the collection of tensors of the form $∑a_iu⊗v_i$,
	that is, the sums where the ``$U$-component'' is always $u$.
	We interpret $U⊗v$ as the collection of tensors of the form $∑a_iu_i⊗v$.
	If $W$ is a subspace of $U$, then we interpret $W⊗V$ as the collection
	of tensors where the ``$U$-component'' is always in $W$.
	Clearly $u⊗V$, $U⊗v$, and $W⊗V$ are subspaces of $U⊗V$.
	
	The tensor notation generalizes to combinations of three or more spaces.
	Let $U$ and $V$ have bases $\{¯u_1,¯u_2…¯u_d\}$
	and $\{¯v_1,¯v_2…¯v_l\}$, respectively.
	Let $W$ be a $k$-dimensional space with a basis $\{¯w_1,¯w_2…¯w_k\}$.
	Not surprising is that $U⊗(V⊗W)$, $(U⊗V)⊗W$,
	and any other combination give the same vector space structure.
	It is common to unify them as $U⊗V⊗W$, a space consisting of
	formal sums of the form $∑_{hij}a_{hij}¯u_h⊗¯v_i⊗¯w_j$.
	The addition is term-wise.
	The scalar multiplication is distributive.
	The dimension is $\dim(U⊗V⊗W)=\dim(U)·\dim(V)·\dim(W)=dlk$.
	Similar to \cref{for:tensorany}, we interpret
	\[*u⊗v⊗w=（∑_ha_h¯u_h）⊗（∑_ib_i¯v_i）⊗（∑_jc_j¯w_j）,\]*
	where $(u,v,w)∈U⊕V⊕W$ and $a_h,b_i,c_j∈𝔽$, as
	\[*∑_{hij}(a_hb_ic_j)¯u_h⊗¯v_i⊗¯w_j∈U⊗V⊗W.\]*
	This defines a ternary operator $•⊗•⊗•$ that is \emph{tri}-linear
	in the sense that $(u+cu')⊗v⊗w=u⊗v⊗w+c(u'⊗v⊗w)$ and
	$u⊗(v+cv')⊗w=u⊗v⊗w+c(u⊗v'⊗w)$ along with $u⊗v⊗(w+cw')=u⊗v⊗w+c(u⊗v⊗w')$.
	This evidently provides a versatile way to describe tensors in $U⊗V⊗W$.
	Namely, a sum $∑_ia_iu_i⊗v_i⊗w_i$ is a tensor.
	We ask again what the least possible length of sums
	that describe a certain tensor is, and call this number its \emph{rank}.
	And then we can talk about whether a tensor is of rank one or not;
	a rank-$1$ tensor is of the form $au⊗v⊗w$ for $a∈𝔽$ and $(u,v,w)∈U⊕V⊕W$.
	Every tensor is a sum of several rank-$1$ tensors.
	To rephrase it, rank-$1$ tensors span a tensor product.
	A critical consequence is that we can describe a linear transformation
	from a tensor product by describing the image of every rank-$1$ tensor.
	
	The dual of a tensor product is the tensor product of duals,
	e.g., $(U⊗Vˇ⊗W)ˇ$ is isomorphic to $Uˇ⊗V⊗Wˇ$.
	Let $ϕ∈(U⊗V⊗W)ˇ$ be a functional.
	(We do not have a word to distinguish plain functionals in $Uˇ$, $Vˇ$,
	and $Wˇ$ from tensor-flavored functionals in $(U⊗V⊗W)ˇ$ and the like.)
	Since every tensor is a sum of rank-$1$ tensors, describing $ϕ$
	is equivalent to describing $ϕ$'s evaluations at rank-$1$ tensors.
	Even more generally, when $ϕ∈(U⊗U⊕V⊗W⊗V)ˇ$ is a very complicated functional,
	it is in fact determined by the restrictions to the direct summands,
	$ϕ↾U⊗U$ and $ϕ↾V⊗W⊗V$.
	For $ϕ↾U⊗U$, it reduces to understanding evaluations
	at rank-$1$ tensors of the form $u_1⊗u_2$;
	for $ϕ↾V⊗W⊗V$, it reduces to understanding evaluations
	at rank-$1$ tensors of the form $v_3⊗w_4⊗v_5$.
	
	Tensor products of spaces generalize to
	tensor products of mappings in the following regard.
	Let $ϕ：U→W$ be a linear transformation,
	then there is a linear transformation $Φ：V⊗U→V⊗W$
	that sends rank-$1$ tensors $v⊗u∈V⊗U$ to $v⊗ϕ(u)∈V⊗W$.
	In other words,
	we can prepend the mapping process $u↦ϕ(u)$ by the prefix ``$v⊗{}$''.
	Expression of a high-rank tensor as a sum of rank-$1$ tensors
	is not unique, but the result of applying $Φ$ would not change.

\subsection{Tensor powers and exterior powers}

	Let $\V0$ be $𝔽$;
	let $\V1$ be $V$;
	and let $\Vp$ be a product $V⊗V⊗\dotsb⊗V$ of $p$ many $V$'s.
	This is called the \emph{$p$th tensor power of $V$}.
	Some authors write $V^{⊗p}$.
	Let $¯v_1,¯v_2…¯v_l$ form a basis of $V$.
	Tensors in $\Vp$ are of the form
	\[∑_{i_1,i_2…i_p∈[l]}a_{i_1i_2\dotsm i_p}¯v_{i_1}⊗¯v_{i_2}⊗\dotsb⊗¯v_{i_p},
		\eqlabel{for:powerbasis}\]
	where $a_{i_1i_2\dotsm i_p}∈𝔽$.
	Same as before, we allow arbitrary vectors to build-up rank-$1$ tensors.
	Thus a tensor in $\Vp$ can be described by
	$∑_ia_iv_{i1}⊗v_{i2}⊗\dotsb⊗v_{ip}$,
	where $a_i∈𝔽$ and $v_{ij}∈V$ are arbitrary.
	The addition is done via matching rank-$1$ tensors.
	The scalar multiplication is distributive.
	The dimension is $\dim(\Vp)=\dim(V)^p=l^p$.
	To avoid confusion, it is worth noting that
	$v_1⊗v_2$ is in general not equal to $v_2⊗v_1$
	unless $v_1$ is a multiple of $v_2$ or $v_2=0$.
	
	\def\WW#1{T^{#1}W}
	Let $\W0$ be $𝔽$;
	let $\W1$ be $W$.
	Let $¯w_1,¯w_2…¯w_k$ form a basis of $W$.
	Let $\Wq$ be the space consisting of formal sums of the form
	\[*∑a_{i_1i_2\dotsm i_q}¯w_{i_1}∧¯w_{i_2}∧\dotsb∧¯w_{i_q},\]*
	where the summation is over all $i_1,i_2…i_q∈[k]$
	such that $1≤i_1<i_2<\dotsb<i_q≤k$.
	And each $¯w_{i_1}∧¯w_{i_2}∧\dotsb∧¯w_{i_q}$
	is an unbreakable, free variable.
	When $q<0$ or $q>k$, the summation is empty,
	so the space is a singleton $𝔽^0=\{0\}$.
	The space becomes interesting after we define the \emph{\WM/}
	\[*Δ：\WW q⟶\Wq\]*
	that sends $¯w_{j_1}⊗¯w_{j_2}⊗\dotsb⊗¯w_{j_q}$ to
	\[*\begin{cases*}
		0∈Λ^qW & if some indices coincide, \\
		(-1)^σ¯w_{i_1}∧¯w_{i_2}∧\dotsb∧¯w_{i_q} & otherwise,
	\end{cases*}\]*
	where $i_1<i_2<\dotsb<i_q$ is the sorted copy of the indices $j_1,j_2…j_q$,
	and $σ$ is the number of swaps used to sort.
	For a sum of several $¯w_{j_1}⊗¯w_{j_2}⊗\dotsb⊗¯w_{j_q}$
	like \cref{for:powerbasis},
	$Δ$ applies to each summand and the images are added together.
	This makes $Δ$ a linear transformation.
	
	Elements of $\Wq$ are also called \emph{tensors}.
	The \WM/ $Δ$ allows us to describe tensors in $\Wq$ more concisely.
	We interpret
	\[*w_1∧w_2∧\dotsb∧w_q\]*
	as
	\[*Δ(w_1⊗w_2⊗\dotsb⊗w_q)∈\Wq,\]*
	where $w_1,w_2…w_q∈W$.
	Then we can use arbitrary vectors in $W$ to describe tensors in $\Wq$:
	What make up $\Wq$ are sums of rank-$1$ tensors
	of the form $∑_ia_iw_{i1}∧w_{i2}∧\dotsb∧w_{iq}$,
	where $a_i∈𝔽$ and $w_{ij}∈W$ are arbitrary.
	The addition is done via matching rank-$1$ tensors.
	The scalar multiplication is distributive.
	The dimension is $\dim(\Wq)=\binom{\dim(W)}q=\binom kq$.
	This syntax has the following two famous characterizations.
	\begin{itemize}
		\item \emph{Multilinearity}.
			It is linear in every of its $w$'s, meaning that\\
			$w_1∧\dotsb∧(w_i+cw_i')∧\dotsb∧w_q$ is equal to\\
			$(w_1∧\dotsb∧w_i∧\dotsb∧w_q)+c(w_1∧\dotsb∧w_i'∧\dotsb∧w_q)$.
		\item \emph{Anti-commutativity}.
			Repetition causes void, meaning that\\
			$w_1∧\dotsb∧w_i∧\dotsb∧w_j∧\dotsb∧w_q=0$ if $w_i=w_j$.\\
			This implies that swapping two $w$'s causes a sign change,\\
			$w_1∧\dotsb∧w_i∧\dotsb∧w_j∧\dotsb∧w_q
				=-w_1∧\dotsb∧w_j∧\dotsb∧w_i∧\dotsb∧w_q$.
	\end{itemize}
	Both multilinearity and anti-commutativity are easily verified.
	Note that tensors in $\Vp$ are also multilinear in the same sense---%
	$v_1⊗\dotsb⊗(v_i+cv_i')⊗\dotsb⊗v_q$ is equal to
	$(v_1⊗\dotsb⊗v_i⊗\dotsb⊗v_q)+c(v_1⊗\dotsb⊗v_i'⊗\dotsb⊗v_q)$.
	This, too, is easy to verify.
	
	Consider the wedge square $Λ^2W$.
	We interpret $w∧W$ as the collection of tensors of the form $∑a_iw∧w_i$,
	that is, the sums where the first component is always~$w$.
	We interpret $W∧w$ as the collection of tensors of the form $∑a_iw_i∧w$,
	which is the same subset as $w∧W$.
	For higher wedge powers, one can interpret
	$w_1∧w_2∧W$, $w_1∧W∧W∧w_2$, $w_1∧W∧w_2∧W∧w_3$, etc.\ similarly.
	It is clear that
	they are subspaces of $Λ^qW$ for the obvious choices of $q$.
	In particular, $W∧W∧\dotsb∧W=Λ^qW$.

\subsection{Generating functions}

	We make use of ordinary generating functions (hereafter \emph{ogf})
	to encode and manipulate series of numbers.
	We utilize that the convolution of two series is encoded by the product,
	namely $∑_n\(∑_{p+q=n}a_pb_q\)x^n=\(∑_pa_px^p\)\(∑_qb_qx^q\)$.
	The shifting of a series is encoded by the multiplication by $x$,
	namely $∑_pa_px^{p+1}=x∑_pa_px^p$.
	
	Two critical types of ogfs are as below:
	A geometric progression $c$, $ca$, $ca^2$, $ca^3$, $\dotsc$
	is encoded by $c/(1-ax)$.
	The binomial coefficients
	$\binom k0$, $\binom k1$, $\binom k2$, $\binom k3$, $\dotsc$
	(varying the bottom while fixing the top) is encoded by $(1+x)^k$.
	The former captures the dimensions of tensor powers.
	The latter captures the dimensions of wedge powers.

讀
\section
					   A Tailor Made Algebraic Foundation   					

\label{sec:tailor}

	Let $𝔽$ be a field.
	Let $V$ be $𝔽^{d-k}$.
	Let $W$ be $𝔽^k$.
	Let $U$ be the direct sum $V⊕W$, isomorphic to $𝔽^d$.
	Let $\Vp$ be the $p$th tensor power of $V$.
	Let $\Wq$ be the $q$th wedge power of $W$.
	
	We work on these two types of spaces:
	One is $\Vp⊗V⊗\Wq$, which is exactly $\V{p+1}⊗\Wq$ (up to associativity).
	The other is $\Vp⊗W⊗\Wq$.
	Note that the direct sum of these two spaces is isomorphic to $\Vp⊗U⊗\Wq$.
	The upcoming diagram depicts the direct sum---%
	there is an inclusion map from top to middle induced by the inclusion $V↪U$,
	and an inclusion map from bottom to middle induced by the inclusion $W↪U$.
	\[\CD[skew 3=]{
		\Vp⊗V⊗\Wq      \\
		\Vp⊗U⊗\Wq\ARDS \\
		\Vp⊗W⊗\Wq      
	}\eqlabel{dia:directsum}\]
	The other two maps are---%
	a projection map from middle to top
	induced by the canonical projection $U↠V$,
	and a projection map from middle to bottom
	induced by the canonical projection $U↠W$.

\subsection{Cowedge-multiplication}

	Another map that is equally pivotal is called \emph{\CM/}
	and defined inductively as follows:
	\begin{align*}
		∇：\Vp⊗\W1 &⟶ \Vp⊗W\\
		ν⊗w_1&⟼ν⊗w_1,\steplabel{for:N0}\\
		∇：\Vp⊗\W2 &⟶ \Vp⊗W⊗\W1\\
		ν⊗w_1∧w_2&⟼ν⊗w_1⊗w_2-ν⊗w_2⊗w_1,\steplabel{for:N1}\\
		∇：\Vp⊗\W{q+1} &⟶ \Vp⊗W⊗\Wq\\
		ν⊗ω∧w_1 &⟼ ∇(ν⊗ω)∧w_1+(-1)^qν⊗w_1⊗ω,\steplabel{for:Nq}
	\end{align*}
	for all $p≥0$, all $q≥2$, all $ν∈\Vp$, all $ω∈\Wq$, and all $w_1,w_2∈W$.
	For tensors of higher ranks,
	the map applies term-wisely and the images are added.
	It can be verified that $∇$ is well-defined and linear.
	(The mapping does not depend on
	representation of a tensor as a sum of rank-$1$ tensors, or
	representation of a rank-$1$ tensor as a product of vectors.)
	
	As an example, $∇$ sends $ν⊗w_1∧w_2∧w_3∧w_4$ to the tensor
	$ν⊗(w_1⊗w_2∧w_3∧w_4
	     -w_2⊗w_1∧w_3∧w_4
	     +w_3⊗w_1∧w_2∧w_4
	     -w_4⊗w_1∧w_2∧w_3)$
	for all $p≥0$, all $ν∈\Vp$, and all $w_1,w_2,w_3,w_4∈W$.
	For rank-$1$ tensors with general $q$, it brings each of the constituent
	vectors in the $\W{q+1}$-segment ($w_1…w_4$ in the above example)
	to the ``front'', and then it assigns alternating signs.
	Note that we could have written $w_1∧w_2∧w_3∧w_4$ as $-w_1∧w_3∧w_2∧w_4$
	or as $w_2∧w_1∧w_4∧w_3$, but the alternating rule gives the same signs.
	
	Convention:
	Latin letters $u,v,w$ represent vectors.
	Specifically, $u∈U$ and $v∈V$ along with $w∈W$.
	Greek letters $ν,ω$ represent tensors.
	In particular, $ν$ (nu) represents a sum of tensors of $v$'s in $V$;
	letter $ω$ (omega) represents a sum of tensors of $w$'s in $W$.

\subsection{Coboundary operators}

	We now define some \emph{differentials}.
	More precisely, they are \emph{coboundary operators},
	which means that they increase the degree/grade of tensors.
	For every $©v©∈V$, define a linear transformation by induction.
	\begin{align*}
		∂^V_©v©：\Wq &⟶ U⊗\Wq\\
		ω &⟼ 0,\steplabel{for:V-}\\
		∂^V_©v©：U⊗\Wq &⟶ \V1⊗U⊗\Wq\\
		u⊗ω &⟼ ©v©⊗u⊗ω,\steplabel{for:V0}\\
		∂^V_©v©：\Vp⊗U⊗\Wq &⟶ \V{p+1}⊗U⊗\Wq\\
		ν⊗u⊗ω &⟼ ∂^V_©v©(ν)⊗u⊗ω+(-1)^pν⊗©v©⊗u⊗ω\steplabel{for:Vp}
	\end{align*}
	for all $p≥1$, all $q≥0$, all $ν∈\Vp$, all $u∈U$, and all $ω∈\Wq$.
	Note that to compute $∂^V_©v©(ν)$ when $ν∈\V1$,
	the argument is included through the chain $ν∈\V1=V↪U≅U⊗\W0$;
	as a result, $∂^V_©v©(ν)=©v©⊗ν∈\V1⊗U⊗\W0$.
	Similarly, for $ν∈\V{p+1}$, place it in $\V{p+1}≅\Vp⊗V↪\Vp⊗U≅\Vp⊗U⊗\W0$.
	For tensors of higher ranks,
	the map applies term-wisely and the images are added.
	It can be verified that $∂^V_©v©$ is well-defined and linear
	(the result does not depend on the representation).
	
	As an example, $∂^V_©v©$ sends $v_1⊗v_2⊗v_3⊗u_4⊗ω$ to the tensor
	$(©v©⊗v_1⊗v_2⊗v_3⊗u_4
	 -v_1⊗©v©⊗v_2⊗v_3⊗u_4
	 +v_1⊗v_2⊗©v©⊗v_3⊗u_4
	 -v_1⊗v_2⊗v_3⊗©v©⊗u_4)⊗ω$
	for all $v_1,v_2,v_3∈V$, all $u_4∈U$, all $q$, and all $ω∈\Wq$.
	For general $p$, it inserts $©v©$ into each of $p+1$ ``gaps''
	in the $\Vp$-segment and then assigns alternating signs.
	The signs are designed such that some other equalities
	in our constructions do not come with convoluted signs.
	For example, it can be shown that $(∂^V_v)^2=0$ as a transformation.
	Elaborately, $∂^V_v\(∂^V_v(ν⊗u⊗ω)\)=0$.
	Also, one can verify that $∂^V_v$ is linear in $v$;
	that is to say, $∂^V_{v+cv'}=∂^V_v+c∂^V_{v'}$ as transformations.
	These two properties together imply that
	$∂^V_v∂^V_{v'}+∂^V_{v'}∂^V_v=(∂^V_{v+v'})^2-(∂^V_v)^2-(∂^V_{v'})^2=0$.
	To sum up, the composition of $∂^V$s is anti-communicative.
	Moreover, since $∂^V_v$ involves operations on the $\Vp$ segment only,
	whether the $\Wq$ segment is included in the argument does not matter,
	\[∂^V_v(ν⊗u⊗ω)=∂^V_v(ν⊗u)⊗ω.\eqlabel{for:V=}\]
	Be aware that \cref{for:V=} is not equal to $∂^V_v(ν)⊗u⊗ω$,
	the difference being $ν⊗v⊗u⊗ω$.
	
	Here are some other transformations.
	For every $w∈W$, define a linear transformation by induction.
	\begin{align*}
		∂^W_©w©：\Vp⊗U &⟶ \Vp⊗U⊗\W1\\
		ν⊗u &⟼ (-1)^pν⊗u⊗©w©,\steplabel{for:W0}\\
		∂^W_©w©：\Vp⊗U⊗\Wq &⟶ \Vp⊗U⊗\W{q+1}\\
		ν⊗u⊗ω &⟼ (-1)^{p+q}ν⊗u⊗ω∧©w©\steplabel{for:Wq}
	\end{align*}
	for all $p≥0$, all $q≥1$, all $ν∈\Vp$, all $u∈U$, and all $ω∈\Wq$.
	For tensors of higher ranks,
	the map applies term-wisely and the images are added.
	It can be verified that $∂^W_w$ is well-defined and linear.
	It can be shown that $(∂^W_w)^2=0$.
	Besides, one can verify that $∂^W_w$ is linear in $w$.
	Symbolically, $∂^W_{w+cw'}=∂^W_w+c∂^W_{w'}$ as linear transformations.
	These two properties together imply
	$∂^W_w∂^W_{w'}+∂^W_{w'}∂^W_w=(∂^V_{w+w'})^2-(∂^W_w)^2-(∂^W_{w'})^2=0$.
	In summary, the composition of $∂^W$s is anti-communicative, too.
	
	For all $v∈V$ and $w∈W$,
	we claim that $∂^V_v$ and $∂^W_w$ anti-commute.
	More formally, $∂^V_v∂^W_w+∂^W_w∂^V_v=0$.
	Let $u∈U$ be $w+v$ in the definition
	\[*∂^U_u≔∂^V_v+∂^W_w.\]*
	Immediately we have
	$(∂^U_u)^2=(∂^V_v)^2+∂^V_v∂^W_w+∂^W_w∂^V_v+(∂^W_w)^2=0$.
	Beyond that, $∂^U_u$ is linear in $u$ because
	$∂^V_v$ and $∂^W_w$ are both linear in their subscripts.
	So we have $∂^U_{u+cu'}=∂^U_u+c∂^U_{u'}$
	and $∂^U_u∂^U_{u'}+∂^U_{u'}∂^U_u=0$.
	
	The following lemma summarizes some properties we met so far.
	We will use them without further referring.
	
	\begin{lem}
		These hold for all $v,v'∈V$, all $w,w'∈W$, all $u,u'∈U$, and all $c∈𝔽$:
		\[*\def\arraystretch{1.44}\begin{matrix}
			&	(∂^V_v)^2=0,
			&	∂^V_{v+cv'}=∂^V_v+c∂^V_{v'},
			&	∂^V_v∂^V_{v'}+∂^V_{v'}∂^V_v=0; \\
			&	(∂^W_w)^2=0,
			&	∂^W_{w+cw'}=∂^W_w+c∂^W_{w'},
			&	∂^W_w∂^W_{w'}+∂^W_{w'}∂^W_w=0; \\
			&	∂^U_v=∂^V_v,
			&	∂^U_w=∂^W_w,
			&	∂^V_v∂^W_w+∂^W_w∂^V_v=0; \\
			&	(∂^U_u)^2=0,
			&	∂^U_u∂^U_{u'}+∂^U_{u'}∂^U_u=0,
			&	∂^U_{u+cu'}=∂^U_u+c∂^U_{u'}.
		\end{matrix}\]*
	\end{lem}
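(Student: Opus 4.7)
The plan is to prove the identities in the order they appear, organizing them so that polarization can be invoked whenever possible. First I would verify the subscript-linearities $∂^V_{v+cv'}=∂^V_v+c∂^V_{v'}$ and $∂^W_{w+cw'}=∂^W_w+c∂^W_{w'}$, which are essentially immediate from the recursive definitions: in the base cases \cref{for:V0,for:W0,for:Wq} the subscript appears exactly once and linearly, while the inductive step \cref{for:Vp} combines two expressions each linear in the subscript, so a straightforward induction on the tensor rank goes through.

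With linearity in hand, the square-zero identities $(∂^V_v)^2=0$ and $(∂^W_w)^2=0$ are the core content. For $∂^W$ the computation is short: on $ν⊗u⊗ω∈\Vp⊗U⊗\Wq$, one application produces $(-1)^{p+q}ν⊗u⊗ω∧w$, and a second application wedges $w$ again, giving $0$ by anti-commutativity of $∧$. For $∂^V$ I would induct on $p$; the key verification is that when one expands
\[*∂^V_v\bigl(∂^V_v(ν)⊗u⊗ω+(-1)^pν⊗v⊗u⊗ω\bigr),\]*
the inductive hypothesis kills the $∂^V_v∂^V_v(ν)$ contribution, and the two surviving cross terms come with opposite signs because the second application of $∂^V_v$ sees the $\Vp$-degree increased by $1$, toggling the $(-1)^p$ prefactor. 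The anti-commutations $∂^V_v∂^V_{v'}+∂^V_{v'}∂^V_v=0$ and $∂^W_w∂^W_{w'}+∂^W_{w'}∂^W_w=0$ then fall out by polarization: expand $(∂^V_{v+v'})^2=0$ using subscript-linearity and subtract the two diagonal squares, exactly as already sketched in the text.

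For the mixed anti-commutation $∂^V_v∂^W_w+∂^W_w∂^V_v=0$ I would argue directly on a rank-$1$ tensor $ν⊗u⊗ω∈\Vp⊗U⊗\Wq$. Applying $∂^W_w$ first appends $∧w$ on the right with overall sign $(-1)^{p+q}$, and a subsequent $∂^V_v$ inserts $v$ on the left with its own sign pattern; applying $∂^V_v$ first raises the $\Vp$-degree to $p+1$, so the subsequent $∂^W_w$ then contributes sign $(-1)^{(p+1)+q}$. The two compositions produce identical tensors in $\V{p+1}⊗U⊗\W{q+1}$ up to opposite total signs, yielding the desired anti-commutation. Finally, the $U$-level assertions $∂^U_v=∂^V_v$ and $∂^U_w=∂^W_w$ follow from the defining decomposition $∂^U_u=∂^V_v+∂^W_w$ by taking $u=v$ (so the $W$-part vanishes) or $u=w$; the remaining $U$-level identities—square-zero, pairwise anti-commutation, and subscript-linearity—follow by expanding $∂^U_u=∂^V_v+∂^W_w$ and invoking the already-proved $V$, $W$, and mixed identities.

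I expect the sign bookkeeping in the inductive proof of $(∂^V_v)^2=0$ to be the main subtlety—one must track carefully how the $(-1)^p$ factor in \cref{for:Vp} toggles between the two nested applications of $∂^V_v$—but this is routine verification rather than anything conceptually deep, and once it is settled every other identity reduces either to polarization or to a one-line expansion.
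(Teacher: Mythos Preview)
Your proposal is correct and follows essentially the same approach as the paper: the paper also singles out $(∂^V_v)^2=0$ (proved by induction on the tensor degree, tracking the sign toggle exactly as you describe) and $∂^V_v∂^W_w+∂^W_w∂^V_v=0$ (proved by the same direct sign comparison you give) as the two substantive identities, and dismisses the remaining ones as routine or trivial. Your more explicit treatment of the linearities, $(∂^W_w)^2=0$, and the polarization deductions simply fills in what the paper leaves to the reader.
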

	\begin{proof}
		We prove $(∂^V_v)^2=0$ and $∂^V_v∂^W_w+∂^W_w∂^V_v=0$.
		The rest are routine, if not trivial.
		The proof of the former:
		\begin{align*}
			∂^V_®v®\(∂^V_©v©(ν⊗u⊗ω)\)
			&=	∂^V_®v®\(∂^V_©v©(ν)⊗u⊗ω+(-1)^pν⊗©v©⊗u⊗ω\)\bytag{for:Vp}\\
			&=	∂^V_®v®\(∂^V_©v©(ν)⊗u⊗ω\)+(-1)^p∂^V_®v®\(ν⊗©v©⊗u⊗ω\).
		\end{align*}
		It remains to prove that
		the two terms in the last line cancel each other.
		The first term is
		\[*∂^V_®v®\(∂^V_©v©(ν)⊗u⊗ω\)=∂^V_®v®\(∂^V_©v©(ν)\)⊗u⊗ω
			+(-1)^{p+1}∂^V_©v©(ν)⊗®v®⊗u⊗ω.\bytag{for:Vp}\]*
		The second term is
		\begin{align*}
			∂^V_®v®\(ν⊗©v©⊗u⊗ω\)
			&=	∂^V_®v®(ν⊗©v©)⊗u⊗ω+(-1)^{p+1}ν⊗©v©⊗®v®⊗u⊗ω\bytag{for:Vp}\\
			&=	∂^V_®v®(ν)⊗©v©⊗u⊗ω+(-1)^pν⊗®v®⊗©v©⊗u⊗ω\bytag{for:Vp}\\*
			&\kern1em	+(-1)^{p+1}ν⊗©v©⊗®v®⊗u⊗ω.
		\end{align*}
		All terms except $∂^V_®v®\(∂^V_©v©(ν)\)⊗u⊗ω$ cancel.
		By induction on $ν$'s degree/grade, $∂^V_®v®\(∂^V_©v©(ν)\)$ vanishes.
		(The base case is easy.)
		This confirms $(∂^V_v)^2=0$.
		
		The proof of $∂^V_v∂^W_w+∂^W_w∂^V_v=0$:
		This tensor
		\begin{align*}
			∂^V_®v®\(∂^W_©w©(ν⊗u⊗ω)\)
			&=	(-1)^{p+q}∂^V_®v®(ν⊗u⊗ω∧©w©)\bytag{for:Wq}\\
			&=	(-1)^{p+q}∂^V_®v®(ν⊗u)⊗ω∧©w©\bytag{for:V=}
		\end{align*}
		is the opposite of this tensor
		\begin{align*}
			∂^W_©w©\(∂^V_®v®(ν⊗u⊗ω)\)
			&=	∂^W_©w©\(∂^V_®v®(ν⊗u)⊗ω\)\bytag{for:V=}\\
			&=	(-1)^{p+1+q}∂^V_®v®(ν⊗u)⊗ω∧©w©.\bytag{for:Wq}
		\end{align*}
		This confirms $∂^V_v∂^W_w+∂^W_w∂^V_v=0$.
	\end{proof}
	
	The upcoming two equalities are the keys to the repairing scheme
	and the reason behind the rules of assignments of proper signs.
	They can be verified via expanding all definitions carefully.
	We offer two \emph{proofs}---%
	a proof by induction and a \emph{proof} by example.
	
	\begin{lem}[commutator]\label{lem:homotopy}
		For all $v∈V$, all $w∈W$, all $p,q≥0$, all $ν∈\Vp$, and all $ω∈\Wq$,
		both of these hold:
		\begin{align}
			∂^V_©v©\(∇(ν⊗ω)\)-∇\(∂^V_©v©(ν⊗ω)\) &= (-1)^p∇(ν⊗©v©⊗ω),
				\eqlabel{equ:VDDV}\\
			∂^W_©w©\(∇(ν⊗ω)\)-∇\(∂^W_©w©(ν⊗ω)\) &= (-1)^pν⊗©w©⊗ω.
				\eqlabel{equ:WDDW}
		\end{align}
	\end{lem}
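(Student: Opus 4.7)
The plan is to establish both identities by induction on $q$, the wedge degree of $\omega$. The base case, which I take to be $q=1$ (so $\omega = w_1 \in W$), is a direct one-line calculation: $\nabla$ acts as the identity by \cref{for:N0}, and one application of \cref{for:Vp} (resp.\ \cref{for:W0} followed by \cref{for:N1}) produces the right-hand side of \cref{equ:VDDV} (resp.\ \cref{equ:WDDW}). The key observation is that when $∂^V_v$ acts on $\nu \otimes w_1$ \emph{after} $\nabla$, the $w_1$ is treated as sitting in the $U$-slot, affording $p+1$ insertion positions for $v$; whereas when $∂^V_v$ acts on the bare tensor $\nu \otimes w_1$ \emph{before} $\nabla$, the convention is to borrow the last $V$-factor of $\nu$ as its $U$-slot, affording only $p$ positions. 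The extra insertion term is exactly the $(-1)^p \nu \otimes v \otimes w_1$ on the right-hand side.

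For the inductive step, by bilinearity it suffices to treat $\omega = \omega' \wedge w$ with $\omega' \in \W{q-1}$, and \cref{for:Nq} unfolds
\[*\nabla(\nu \otimes \omega' \wedge w) = \nabla(\nu \otimes \omega') \wedge w + (-1)^{q-1}\nu \otimes w \otimes \omega'.\]*
For \cref{equ:VDDV}, applying $∂^V_v$ to the first summand commutes past the trailing $\wedge w$ (since $∂^V_v$ acts only on the $\Vp$-segment), enabling the induction hypothesis at $q-1$; the second summand is already in the native form of \cref{for:Vp}. In parallel, $∂^V_v(\nu \otimes \omega' \wedge w)$ is computed by borrowing a $V$-factor of $\nu$ for the $U$-slot, producing $p$ insertion terms, each of which is then pushed through $\nabla$ by \cref{for:Nq}. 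Collecting, the two terms in $\nabla ∂^V_v(\nu \otimes \omega' \wedge w)$ cancel two of the four terms in $∂^V_v \nabla(\nu \otimes \omega' \wedge w)$; the remaining two reassemble, by a final application of \cref{for:Nq} to $\nu \otimes v \otimes \omega$, into $(-1)^p \nabla(\nu \otimes v \otimes \omega)$.

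Identity \cref{equ:WDDW} follows a parallel but simpler script, because $∂^W_w$ merely appends a wedge factor with a sign per \cref{for:Wq}. The same decomposition $\omega = \omega' \wedge w$ and two rounds of \cref{for:Nq} yield four terms on each side; three pairs cancel by matching signs and anti-commutativity of $\wedge$, leaving exactly $(-1)^p \nu \otimes w \otimes \omega$. Notably, the induction hypothesis for \cref{equ:WDDW} itself is never invoked---the recursive structure lives entirely inside \cref{for:Nq}.

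The principal obstacle is sign bookkeeping rather than any conceptual hurdle. Each invocation of \cref{for:Vp,for:Wq,for:Nq} contributes its own parity sign, and a single off-by-one derails the matching of four-against-four terms. A closely related subtlety is the convention that $∂^V_v$ or $∂^W_w$ applied to a bare tensor in $\Vp \otimes \Wq$ must borrow the last $V$-factor of $\nu$ as its $U$-slot; this asymmetry between the `inner' and `outer' orderings of $∂$ and $\nabla$ is precisely what produces the right-hand side of each identity.
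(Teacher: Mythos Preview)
Your plan for \cref{equ:VDDV} is exactly the paper's: induction on $q$, unfold $\nabla(\nu\otimes\omega'\wedge w_1)$ via \cref{for:Nq}, push $\partial^V_v$ past the trailing $\wedge w_1$ using \cref{for:V=}, invoke the hypothesis at degree $q-1$, and reassemble with a final \cref{for:Nq}. Your reading of the base case---that the commutator measures exactly the one extra insertion slot afforded when the $W$-factor occupies the $U$-position---is the right intuition.

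For \cref{equ:WDDW} you take an unnecessarily long route. The paper does \emph{not} decompose $\omega$: it simply observes $\partial^W_w\bigl(\nabla(\nu\otimes\omega)\bigr)=(-1)^{p+q-1}\nabla(\nu\otimes\omega)\wedge w$ by \cref{for:Wq}, and $\nabla\bigl(\partial^W_w(\nu\otimes\omega)\bigr)=(-1)^{p-1+q}\nabla(\nu\otimes\omega\wedge w)$, then applies \cref{for:Nq} \emph{once} to the latter. Subtracting, the $\nabla(\nu\otimes\omega)\wedge w$ terms cancel on the nose and only $(-1)^p\nu\otimes w\otimes\omega$ survives---a one-line computation with no decomposition and no ``four terms on each side.'' Your own remark that the induction hypothesis is never invoked is precisely the signal that no induction (and hence no splitting $\omega=\omega'\wedge w$) is needed. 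A minor caution: in your write-up you reuse the letter $w$ both for the operator parameter in $\partial^W_w$ and for the peeled-off wedge factor of $\omega$; keep them distinct to avoid tripping over the sign bookkeeping you flag as the main hazard.
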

	\begin{proof}[A proof by induction]
		In the following argument, $ν∈\Vp$ and $ω∈\Wq$ along with $w_1∈W$.
		The proof of \cref{equ:VDDV}, the V-part, goes as follows.
		We infer
		\begin{align*}
			&{}	∂^V_©v©\(∇(ν⊗ω∧w_1)\)\steplabel{ten:induVD}\\
			&\qquad	=∂^V_©v©\(∇(ν⊗ω)∧w_1+(-1)^qν⊗w_1⊗ω\)\bytag{for:Nq}\\
			&\qquad	=∂^V_©v©\(∇(ν⊗ω)∧w_1\)+(-1)^q∂^V_©v©\(ν⊗w_1⊗ω\)\\
			&\qquad	=∂^V_©v©\(∇(ν⊗ω)\)∧w_1\bytag{for:V=}\\*
			&\kern3em	+(-1)^q∂^V_©v©(ν)⊗w_1⊗ω+(-1)^{p+q}ν⊗©v©⊗w_1⊗ω.
						\bytag{for:Vp}
		\end{align*}
		We also infer
		\begin{align*}
			&{}	∇\(∂^V_©v©(ν⊗ω∧w_1)\)\steplabel{ten:induDV}\\
			&\qquad	=∇\(∂^V_©v©(ν)⊗ω∧w_1\)\bytag{for:V=}\\
			&\qquad	=∇\(∂^V_©v©(ν)⊗ω\)∧w_1+(-1)^q∂^V_©v©(ν)⊗w_1⊗ω
					\bytag{for:Nq}\\
			&\qquad	=∇\(∂^V_©v©(ν⊗ω)\)∧w_1+(-1)^q∂^V_©v©(ν)⊗w_1⊗ω.\bytag{for:V=}
		\end{align*}
		Subtracting \cref{ten:induVD} by \cref{ten:induDV}, we arrive at
		\begin{align*}
			&\kern-2em	∂^V_©v©\(∇(ν⊗ω∧w_1)\)-∇\(∂^V_©v©(ν⊗ω∧w_1)\)\\
			&{}	=∂^V_©v©\(∇(ν⊗ω)\)∧w_1
				+(-1)^q∂^V_©v©(ν)⊗w_1⊗ω+(-1)^{p+q}ν⊗©v©⊗w_1⊗ω\kern-.75em\\*
			&\kern1em	-∇\(∂^V_©v©(ν⊗ω)\)∧w_1-(-1)^q∂^V_©v©(ν)⊗w_1⊗ω\\
			&{}	=（∂^V_©v©\(∇(ν⊗ω)\)-∇\(∂^V_©v©(ν⊗ω)\)）
					∧w_1+(-1)^{p+q}ν⊗©v©⊗w_1⊗ω\\
			&{}	=(-1)^p∇(ν⊗©v©⊗ω)∧w_1+(-1)^{p+q}ν⊗©v©⊗w_1⊗ω \tag{by IH}\\
			&{}	=(-1)^p∇(ν⊗©v©⊗ω∧w_1).\bytag{for:Nq} 
		\end{align*}
		Here (by IH) means the induction hypothesis---%
		it uses \cref{equ:VDDV} with one less~$q$ (i.e., shorter $ω$).
		The $q=0$ case (the base case) is straightforward and omitted.
		This finishes \cref{equ:VDDV}.
		
		The proof of \cref{equ:WDDW}, the W-part, goes as follows.
		We deduce
		\begin{align*}
			&{}	∂^W_©w©\(∇(ν⊗ω)\)=(-1)^{p+q-1}∇(ν⊗ω)∧©w©.\bytag{for:Wq}
		\end{align*}
		We also deduce
		\begin{align*}
			∇\(∂^W_©w©(ν⊗ω)\)
			&=	(-1)^{p-1+q}∇(ν⊗ω∧©w©)\bytag{for:Wq}\\
			&=	(-1)^{p-1+q}∇(ν⊗ω)∧©w©+(-1)^{p-1}ν⊗©w©⊗ω\bytag{for:Nq}.
		\end{align*}
		Thus we conclude
		\begin{align*}
			&{}	∂^W_©w©\(∇(ν⊗ω)\)-∇\(∂^W_©w©(ν⊗ω)\)\\
			&\qquad	=(-1)^{p+q-1}∇(ν⊗ω)∧©w©
				-(-1)^{p-1+q}∇(ν⊗ω)∧©w©-(-1)^{p-1}ν⊗©w©⊗ω\\
			&\qquad	=(-1)^pν⊗©w©⊗ω.
		\end{align*}
		This completes \cref{equ:WDDW}.
	\end{proof}
	
	Despite that a formal proof of \cref{lem:homotopy} is given and sound,
	here is a \emph{proof} by example:
	The V-part, \cref{equ:VDDV}, goes as follows:
	\begin{align*}
		&{}	∂^V_©v©\(∇(v_1⊗v_2⊗w_3∧w_4)\)\steplabel{ten:examVD}\\
		&\qquad	=∂^V_©v©\(v_1⊗v_2⊗(w_3⊗w_4-w_3⊗w_4)\)\\
		&\qquad	=(©v©⊗v_1⊗v_2-v_1⊗©v©⊗v_2+v_1⊗v_2⊗©v©)⊗(w_3⊗w_4-w_3⊗w_4).
	\end{align*}
	And then
	\begin{align*}
		&{}	∇\(∂^V_©v©(v_1⊗v_2⊗w_3∧w_4)\)\steplabel{ten:examDV}\\
		&\qquad	=∇\((©v©⊗v_1⊗v_2-v_1⊗©v©⊗v_2)⊗w_3∧w_4\)\\
		&\qquad	=(©v©⊗v_1⊗v_2-v_1⊗©v©⊗v_2)⊗(w_3⊗w_4-w_4⊗w_3).
	\end{align*}
	Subtracting \cref{ten:examDV} from \cref{ten:examVD}, we obtain
	\begin{align*}
		&{}	∂^V_©v©\(∇(v_1⊗v_2⊗w_3∧w_4)\)-∇\(∂^V_©v©(v_1⊗v_2⊗w_3∧w_4)\)\\
		&\qquad	=v_1⊗v_2⊗©v©⊗(w_3⊗w_4-w_4⊗w_3)\\
		&\qquad	=(-1)^2∇(v_1⊗v_2⊗©v©⊗w_3∧w_4)
	\end{align*}
	This supports \cref{equ:VDDV}.
	The W-part, \cref{equ:WDDW}, goes as follows:
	\begin{align*}
		&{}	∂^W_©w©\(∇(ν⊗w_1∧w_2∧w_3)\)\steplabel{ten:examWD}\\
		&\qquad	=∂^W_©w©\(ν⊗(w_1⊗w_2∧w_3-w_2⊗w_1∧w_3+w_3⊗w_1∧w_2)\)\\
		&\qquad	=(-1)^{p+2}ν⊗\(w_1⊗w_2∧w_3∧©w©-w_2⊗w_1∧w_3∧©w©\\*
		&\kern9em	+w_3⊗w_1∧w_2∧©w©\)
	\end{align*}
	And then
	\begin{align*}
		&{}	∇\(∂^W_©w©(ν⊗w_1∧w_2∧w_3)\)\steplabel{ten:examDW}\\
		&\qquad	=(-1)^{p+2}∇(ν⊗w_1∧w_2∧w_3∧©w©)\\
		&\qquad	=(-1)^{p+2}ν⊗\(w_1⊗w_2∧w_3∧©w©-w_2⊗w_1∧w_3∧©w©\\*
		&\kern9em	+w_3⊗w_1∧w_2∧©w©-©w©⊗w_1∧w_2∧w_3\).
	\end{align*}
	Subtract \cref{ten:examDW} from \cref{ten:examWD};
	we derive that $∂^W_©w©\(∇(ν⊗w_1∧w_2∧w_3)\)+∇\(∂^W_©w©(ν⊗w_1∧w_2∧w_3)\)
		=(-1)^pν⊗©w©⊗w_1∧w_2∧w_3$.
	This supports \cref{equ:WDDW}.
	
	That is all preparation we need to define and validate moulin code.
	We start an example in the next section.

讀
\section
						 The Special Case \PM s = 4$s=4$						

\label{sec:example}

	In this section, we construct moulin codes with size $s=4$.
	The other parameters are arbitrary but must satisfy $n>d≥k≥s-1=3$.
	We will claim the file format,
	the node configuration, and the repairing rule;
	we will verify them.
	We also estimate $M$, $α$, and $β$ after we define the file format,
	the node configuration, and the repairing rule, respectively.
	In cascade code terminology, this code has mode $μ=s-1=3$.
	In terms of concatenated layered codes,
	this code starts with a root layered code with layer size $s$,
	and the root code is protected by codes with layers of smaller sizes.
	
	Here is the big picture of the spaces we will be working on in this section.
	\[
	\CD[skew 3=-6.5]{
		\iflabor \W4\ARCM & \VW13      & \VW22  & \VW31  & \fi \V4        \\
		\iflabor          & U⊗\W3\ARDS & \VUW12 & \VUW21 & \fi \V3⊗U\ARDS \\
		\iflabor          & W⊗\W3      & \VWW12 & \VWW21 & \fi \V3⊗W      
	}\eqlabel{dia:play4}\]
	See \cref{sec:gallery} for big pictures for other size $s$.
	From there it should be easy to generalize to arbitrary $s$
	(and arbitrary $n,k,d$).
	The big picture contains three types of spaces.
	Spaces in the top row are called \emph{V-spaces}.
	Spaces in the middle row are called \emph{U-spaces}.
	Spaces in the bottom row are called \emph{W-spaces}.
	Spaces in the same column are in the same direct-sum relationship,
	i.e., \cref{dia:directsum}.
	Spaces connected by a south east arrow labeled $∇$
	are in the same \CM/ relationship,
	i.e., \crefrange{for:N0}{for:Nq}.

\subsection{File format and \PM M$M$}

	The file we are to store is seen as a functional $ϕ$
	from the direct sum of all U-spaces to $𝔽$, subject to some parity checks.
	Equivalently, the file $ϕ$ is an element in
	the direct sum of all U-spaces' dual.
	In our $s=4$ example,
	\begin{align*}
		ϕ
		&∈	(U⊗\W3⊕\V1⊗U⊗\W2⊕\V2⊗U⊗\W1⊕\V3⊗U)ˇ\\
		&=	（⨁_{p+q=3}\Vp⊗U⊗\Wq）ˇ≅⨁_{p+q=3}(\Vp⊗U⊗\Wq)ˇ.\steplabel{spa:file4}
	\end{align*}
	For general size $s$, direct sum over $p+q=s-1$.
	
	For downloading and repairing to work, the file $ϕ$ needs to
	satisfy the following kind of parity checks:
	the diagram
	\[\CD[parity 2]{
		\Vp⊗\W{q+1}\drar{∇}\ar{rr}{ϕ}	&&	𝔽	\\&	\Vp⊗W⊗\Wq\urar[']{ϕ}
	}\eqlabel{dia:PC}\]
	commutes for all possible $p≥1$ and $q≥0$ such that $p+q=s-1=3$.
	Wherein, $ϕ$ can be evaluated at a V-space $\Vp⊗\W{q+1}$
	because there is a natural inclusion going downward
	into the U-space $\V{p-1}⊗U⊗\W{q+1}$ and $ϕ$ is well-defined there.
	Similarly, $ϕ$ can be evaluated at a $W$-space $\Vp⊗W⊗\Wq$
	because there is a natural inclusion going upward
	into the U-space $\Vp⊗U⊗\Wq$ and $ϕ$ is well-defined there.
	So a more accurate parity check diagram looks like a twisted pentagon:
	\[*\kern2.5em\CD[skew 3=]{
		\Vp⊗\W{q+1}\dar[c->]\ARCM[',pos=.66]&	𝔽							\\
		\kern-2.5em\V{p-1}⊗U⊗\W{q+1}\urar{ϕ}&	\Vp⊗U⊗\Wq\uar[',pos=.8]{ϕ}	\\
											&	\Vp⊗W⊗\Wq\uar[c->]			
	}\]*
	In terms of tensors, the parity check reads $ϕ(ν⊗ω)=ϕ\(∇(ν⊗ω)\)$
	for all possible $p≥1$ and $q≥0$ such that $p+q=s-1=3$,
	all $ν∈\Vp$, and all $ω∈\W{q+1}$.
	For general size $s$, quantify over $p+q=s-1$.
	
	There are two edge cases to be clarified.
	\emph{Root check}:
	for $p=0$, the diagram degenerates into the following
	because $\W4$ admits no inclusion going downward.
	\[\CD[parity 2]{
		\W4\drar{∇}\ar{rr}{0}	&&	𝔽	\\&	W⊗\W 3\urar[']{ϕ}
	}\eqlabel{dia:RC}\]
	It reads $0=ϕ\(∇(ω)\)$ for all $ω∈\W4$.
	For general size $s$, quantify over $ω∈\W s$.
	The root check is essential in the edge cases of the repairing scheme.
	\emph{Leaf check}:
	for $q=-1$, the diagram degenerates into the following
	because $\V4$ admits no \CM/ going rightward.
	\[\CD[parity 2]{
		\V4\drar{∇}\ar{rr}{ϕ}	&&	𝔽	\\&	0\urar[']{}
	}\eqlabel{dia:LC}\]
	It reads $ϕ(ν)=0$ for all $ν∈\V4$.
	For general size $s$, quantify over $ν∈\V s$.
	The leaf check plays a role as the initial point in the downloading scheme.
	
	Now we can calculate $M$, the file size, as $ϕ$'s degree of freedom.
	It is the total dimension of the U-spaces
	minus the total dimension of parity checks.
	The former is $∑_{p+q=3}d(d-k)^p\binom kq$.
	This quantity is encoded as the $x^4$-\coef/ of
	\[÷{dx(1+x)^k}{1-(d-k)x}.\eqlabel{fun:Uspace}\]
	The latter, the dimension of parity checks, is (no greater than)
	the total dimension of V-spaces, which is $∑_{p+q=4}(d-k)^p\binom kq$.
	This quantity is encoded as the $x^4$-\coef/ of
	\[÷{(1+x)^k}{1-(d-k)x}.\eqlabel{fun:Vspace}\]
	Thus $M$, the dimension of freedom of $ϕ$, is the difference
	$∑_{p+q=3}d(d-k)^p\binom kq-∑_{p+q=4}(d-k)^p\binom kq$.
	This quantity is encoded as the $x^4$-\coef/s of
	\[*÷{dx(1+x)^k}{1-(d-k)x}-÷{(1+x)^k}{1-(d-k)x}
		=÷{(-1+dx)(1+x)^k}{1-(d-k)x}.\copytag{fun:em}\]*
	
	The purpose of encoding dimensions in ogfs
	is to carry parameters in a rather concise form.
	It also helps us find alternative expressions for the same quantity.
	For instance, \cref{fun:em} is equal to
	\[*÷{(-1+dx)(1+x)^k}{1-(d-k)x}=÷{kx(1+x)^k}{1-(d-k)x}-(1+x)^k.\]*
	Thus its $x^4$-\coef/ is also $∑_{p+q=3}k(d-k)^p\binom kq-\binom k4$.
	This expression coincides with \cite[Formula~(3)]{EM19c}.
	Working with ogfs makes it easy to calculate and compare
	different parameters and their relations.
	For general size $s$, the file size $M$ is
	the $x^s$-\coef/ of the same \cref{fun:em}.

\subsection{Node configuration and \PM\U03B1+$α$}\label{sec:node4}

	Let $[n]$ be the set of integers $\{1,2…n\}$;
	they are one-to-one corresponding to the nodes.
	For each $h∈[n]$, the $h$th node selects a \emph{star vector} $u🔑h∈U$.
	The selection of star vectors is such that:
	\itref{Sd} any $d$ star vectors span $U$; and
	\itref{Sk} any $k$ star vectors span $U/V$.
	If we see star vectors as column vectors of a $d$-by-$n$ matrix,
	the first condition says that any $d$ columns form an invertible submatrix.
	The second condition says that
	the first $k$ rows of any $k$ columns form an invertible submatrix.
	For each $h∈[n]$, the $h$th node stores the restrictions
	of the file $ϕ$ to every U-space but $U$ is replaced by $u🔑h$.
	Symbolically, the restriction to this direct sum
	is stored in the $h$th node:
	\begin{align*}
		&{}	\V3⊗u🔑h⊕\V2⊗u🔑h⊗\W1⊕\V1⊗u🔑h⊗\W2⊕u🔑h⊗\W3\steplabel{spa:node4}\\
		&\qquad	=⨁_{p+q=3}\Vp⊗u🔑h⊗\Wq⊆†\cref{spa:file4}†.
	\end{align*}
	For general size $s$, direct sum over $p+q=s-1$.
	
	By the linearity of $ϕ$, nodes do not,
	and should not, store all possible evaluations.
	The $h$th node should choose a basis of \cref{spa:node4}
	and stores the evaluations at that basis.
	Which basis to choose is up to the implementation of the $h$th node
	and is opaque to users and the other nodes.
	When an entity---be it a user or another node---asks
	the $h$th node for an evaluation at a non-basis tensor,
	the $h$th node computes the linear combination on the fly and replies.
	For all intents and purposes, assume that the $h$th node stores
	all evaluations at \cref{spa:node4}, which costs $α$ symbols.
	
	Now we calculate $α$, the dimension of \cref{spa:node4}.
	We know $\dim(\Vp)=(d-k)^p$ and $\dim(\Wq)=\binom kq$.
	So $α$ is $∑_{p+q=3}(d-k)^p\binom kq$.
	This quantity is encoded as the $x^4$-\coef/ of
	\[*÷{x(1+x)^k}{1-(d-k)x}.\copytag{fun:alpha}\]*
	For general size $s$, the node capacity $α$ is
	the $x^s$-\coef/ of the same ogf.

\subsection{Download scheme}

	Without loss of generality, we download from the first $k$ nodes,
	which feature star vectors $u🔑1…u🔑k$.
	We receive the restriction $ϕ↾\V3⊗u🔑1$ from node number $1$;
	we receive the restriction $ϕ↾\V3⊗u🔑2$ from node number $2$;
	and so on and so forth.
	
	Let $\Ws⊆U$ be the subspace spanned by $u🔑1…u🔑k$.
	We can synthesize the restriction of the file to
	$\spa⟨\V3⊗u🔑1…\V3⊗u🔑k⟩=\V3⊗\Ws$ from the given restrictions by linearity.
	Our first step is to study the restriction to the U-space $\V3⊗U$.
	To leap from $\Ws$ to~$U$, recall that $ϕ↾\V3⊗V=ϕ↾\V4$ vanishes;
	this is in virtue of \cref{dia:LC}.
	According to how star vectors are chosen,
	particularly \itref{Sk}, $\Ws∪V$ spans $U$.
	So we can infer $ϕ↾\spa⟨\V3⊗\Ws,\V3⊗V⟩=ϕ↾\V3⊗U$ by linearity.
	This is the first U-space we understand;
	the first step is complete.
	
	Our second step is to study
	the restriction to the U-space $\V2⊗U⊗\W1$.
	We receive the restriction $ϕ↾\V2⊗u🔑h⊗\W1$
	directly from the $h$th node, for each $h∈[k]$.
	We then synthesize the restriction $ϕ↾\V2⊗\Ws⊗\W1$.
	Now we miss the restriction to the V-space $\V2⊗V⊗\W1=\V3⊗\W1$
	in order to meet the second goal $\V2⊗U⊗\W1$.
	To gain $ϕ↾\V3⊗\W1$, recall that we just recovered $ϕ↾\V3⊗U$
	in the previous paragraph, so we know $ϕ↾\V3⊗W$ as a further restriction.
	Invoke the \CM/ relation, i.e., \cref{dia:PC};
	we deduce $ϕ↾\V3⊗\W1$.
	Now we finish recovering $ϕ↾\V2⊗U⊗\W1$;
	the second step is complete.
	
	For the third U-space $\V1⊗U⊗\W2$ and what follows,
	we continue with an induction argument.
	In each step, the induction hypothesis is that we know $ϕ↾\V{p+1}⊗W⊗\Wq$.
	From the induction hypothesis and the \CM/ relation
	we infer $ϕ↾\Vp⊗V⊗\W{q+1}$.
	We also receive $ϕ↾\Vp⊗\Ws⊗\W{q+1}$ directly from the first $k$ nodes.
	Since $\Ws$ and $V$ span $U$, we deduce $ϕ↾\Vp⊗U⊗\W{q+1}$.
	And then we proceed to the next $p$ and $q$
	(decreasing $p$ by $1$ and increasing $q$ by~$1$).
	The induction lets us study all U-spaces, one by one,
	and hence we comprehend the file~$ϕ$.
	See \cref{fig:download} for the recovery schedule in its entirety.
	
	\begin{figure}
		\def\VSW#1#2{\V#1⊗\Ws⊗\W#2\ARST}
		\def\ARST{\ar[start anchor={north east},end anchor=south east,
				xshift=-1em,bend right,c->,overlay]{uu}}
		$$\CD[skew 3=,/tikz/row 4/.style={xshift=1.5em},
			execute at end picture={
				\iflabor
				\tikzset{nodes={overlay,circle,inner sep=1,fill=Citron}}
				\def\M{\tikzcdmatrixname}
				\def\step{1}\def\stepstep{\xdef\step{\the\numexpr\step+1}}
				\foreach\c in{5,...,2}{
					\draw(\M-1-\c.45)node{$\step$};\stepstep
					\draw(\M-4-\c.45)node{$\step$};\stepstep
					\draw(\M-2-\c.135)node{$\step$};\stepstep
					\ifnum\step<16
					\draw(\M-3-\c.150)node{$\step$};\stepstep
					\fi
				}
				\fi
			}
		]{
			\iflabor \W4\ARCM
			         & \VW13        & \VW22   & \VW31  & \fi \V4          \\
			\iflabor & U⊗\W3\ARDS   & \VUW12  & \VUW21 & \fi \V3⊗U\ARDS   \\
			\iflabor & W⊗\W3        & \VWW12  & \VWW21 & \fi \V3⊗W        \\
			\iflabor & \Ws⊗\W3\ARST & \VSW12  & \VSW21 & \fi \V3⊗\Ws\ARST
		}$$\caption{
			Step by step downloading scheme.
			Step $1$ is the leaf check.
			Step $4j+2$ is receiving from nodes.
			Step $4j+3$ is (Sk).
			Step $4j+4$ is restricting.
			Step $4j+5$ is induction hypothesis and \CM/.
		}\label{fig:download}
	\end{figure}

\subsection{Repair scheme and \PM\U03B2+$β$}\label{sec:repair4}

	Repairing requires some extra preparation.
	Recall that $U≅V⊕W$.
	Thus every star vector $u🔑f$ admits a decomposition
	$u🔑f=v🔑f+w🔑f$ for some $v🔑f∈V$ and some $w🔑f∈W$.
	We will use $u🔑f$, $v🔑f$, and $w🔑f$ to define the coboundary operators.
	And then we specify the repairing rule built upon them.
	We verify the repairing rule in two ways.
	Lastly we calculate $β$.
	
	For each $f∈[n]$, define an alias $∂^V_f≔∂^V_{v🔑f}$.
	To expand the inductive definition:
	\begin{align*}
		∂^V_f(ω) &= 0,\\
		∂^V_f(u⊗ω) &= v🔑f⊗u⊗ω,\\
		∂^V_f(ν⊗u⊗ω) &= ∂^V_f(ν)⊗u⊗ω+(-1)^pν⊗v🔑f⊗u⊗ω
	\end{align*}
	for all $p≥1$, all $q≥0$, all $ν∈\Vp$, all $u∈U$, and all $ω∈\Wq$.
	Recall that $(∂^V_f)^2=0$ and $∂^V_f(ν⊗u⊗ω)=∂^V_f(ν⊗u)⊗ω$;
	we need them later.
	Here is the other family of coboundary operators.
	For each $f∈[n]$, define an alias $∂^W_f≔∂^W_{w🔑f}$.
	To expand the inductive definition:
	\begin{align*}
		∂^W_f(ν⊗u) &= (-1)^pν⊗u⊗w🔑f,\\
		∂^W_f(ν⊗u⊗ω) &= (-1)^{p+q}ν⊗u⊗ω∧w🔑f
	\end{align*}
	for all $p≥0$, all $q≥1$, all $ν∈\Vp$, all $u∈U$, and all $ω∈\Wq$.
	Recall that $(∂^W_f)^2=0$ and $∂^V_f∂^W_f+∂^W_f∂^V_f=0$.
	For each $f∈[n]$, define an alias
	\[*∂^U_f≔∂^U_{u🔑f}=∂^V_f+∂^W_f.\]*
	Remember that $(∂^U_f)^2=0$ and, equivalently,
	\[∂^U_f∂^W_f+∂^U_f∂^V_f=0.\eqlabel{for:U+}\]
	
	When the $f$th node fails,
	the system commands $d$ nodes to help repair it.
	In our repairing scheme, the help message from one helper node
	does not depend on the identity of the other helper nodes
	($𝘚^ℋ_{h→f}$ depends on $h,f$ but not on $ℋ、\{h\}$).
	In other words, it suffices to specify
	what a node should send out if it is commanded to help the $f$th.
	The $h$th node, if signaled,
	sends the failing node restrictions to these subspaces of coboundaries
	\[∂^U_f(\Vp⊗u🔑h⊗\Wq)\eqlabel{spa:help}\]
	for all possible $p,q≥0$ such that $p+q=s-2=2$.
	For general size~$s$, quantify over $p+q=s-2$.
	For repair to work, there are four statements we need to go through:
	\itlabel{Ra} What a helper node sends out is a function of its content.
	\itlabel{Rb} There is a repairing rule that uses nothing but help messages.
	\itlabel{Rc} The rule covers all symbols the failing node used to hold.
	\itlabel{Rd} The rule repairs correctly.
	Cf.\ \cref{equ:d}.
	And there is one quantity to calculate: $β$.
	
	We now go through \itref{Ra}--\itref{Rd}.
	\itref{Ra} is straightforward as \cref{spa:help} is a subspace
	of a direct sum $\V{p+1}⊗u🔑h⊗\Wq⊕\Vp⊗u🔑h⊗\W{q+1}$
	whose direct summands are both contained in \cref{spa:node4}.
	For \itref{Rb}--\itref{Rd}, we claim the repairing rule:
	the failing node regains the evaluation at $ν⊗u🔑f⊗ω$ via computing
	the left-hand side of \cref{equ:repair} in this theorem.
	
	\begin{thm}[repairing rule]\label{thm:repair}
		For all possible $p,q≥0$ such that $p+q=s-2$,
		all $ν∈\Vp$, and all $ω∈\Wq$,
		\[ϕ（∂^U_f\(∇(ν⊗ω)\)）-ϕ（∂^U_f\(ν⊗ω\)）
			=(-1)^pϕ(ν⊗u🔑f⊗ω)\eqlabel{equ:repair}\]
		Note that for the $p=0$ case,
		$ϕ\(∂^U_f(ω)\)=0$ due to the root check, \cref{dia:RC}.
	\end{thm}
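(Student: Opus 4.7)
The plan is to reduce \cref{equ:repair} to the commutator lemma \cref{lem:homotopy} in two steps. First, I apply the V-part \cref{equ:VDDV} with $v=v🔑f$ and the W-part \cref{equ:WDDW} with $w=w🔑f$, and add the two identities. Since $∂^U_f=∂^V_f+∂^W_f$ and $u🔑f=v🔑f+w🔑f$, the sum gives the tensor identity
\[*∂^U_f\bigl(∇(ν⊗ω)\bigr)-∇\bigl(∂^U_f(ν⊗ω)\bigr)=(-1)^p\bigl[∇(ν⊗v🔑f⊗ω)+ν⊗w🔑f⊗ω\bigr],\]*
with all four terms residing in the V- and W-spaces that include into the U-spaces where $ϕ$ is defined.

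Second, I apply $ϕ$ to both sides and use the parity check \cref{dia:PC} to replace $ϕ∘∇$ by $ϕ$ on every $∇$-containing term. Concretely, $∂^U_f(ν⊗ω)$ splits into the V-space summand $∂^V_f(ν)⊗ω\in\V{p+1}⊗\Wq$ and the V-space summand $\pm ν⊗ω∧w🔑f\in\Vp⊗\W{q+1}$, to each of which the parity check applies; and $ν⊗v🔑f⊗ω\in\V{p+1}⊗\Wq$ is a V-space to which the parity check likewise applies. After this collapse, linearity of $ϕ$ together with $u🔑f=v🔑f+w🔑f$ combines $ϕ(ν⊗v🔑f⊗ω)+ϕ(ν⊗w🔑f⊗ω)=ϕ(ν⊗u🔑f⊗ω)$, turning the resulting identity into \cref{equ:repair}.

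The main delicate point is bookkeeping: one must check that every intermediate tensor lies in a V-, W-, or U-space of the correct grade so that each invocation of the parity check is legal, and that the signs from \cref{lem:homotopy} line up with those in \cref{equ:repair}. This is exactly what the algebraic machinery of \cref{sec:tailor} is engineered to automate. The edge case $p=0$ falls out cleanly: $∂^V_f$ annihilates the pure W-tensor $ω$, while the remaining term $ϕ(∂^W_f(ω))$ is killed by the root check \cref{dia:RC}, precisely as the theorem's footnote records.
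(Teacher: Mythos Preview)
Your proposal is correct and follows essentially the same route as the paper's proof: both add the two commutator identities from \cref{lem:homotopy} to obtain the combined tensor identity for $∂^U_f$, then apply $ϕ$ and invoke the parity check \cref{dia:PC} twice (once on $∇\bigl(∂^U_f(ν⊗ω)\bigr)$ and once on $∇(ν⊗v^\star_f⊗ω)$), and finally recombine via $u^\star_f=v^\star_f+w^\star_f$. The only cosmetic difference is that the paper starts from the left-hand side of \cref{equ:repair} and first replaces $ϕ\bigl(∂^U_f(ν⊗ω)\bigr)$ by $ϕ\bigl(∇(∂^U_f(ν⊗ω))\bigr)$ before splitting $∂^U_f$, whereas you begin from the commutator lemma and then apply $ϕ$; the ingredients and their order of use are otherwise the same.
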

	
	\itref{Rb} is equivalent to whether the tensor in the right-hand side of
	\cref{equ:repair} exhausts $\Vp⊗u🔑f⊗\Wq$, which it does.
	\itref{Rc} is equivalent to whether the help messages contain
	the evaluations on the left-hand side of \cref{equ:repair}.
	To answer that, notice that the failing node receives the restriction
	to $∂^U_f(\Vp⊗u🔑h⊗\Wq)$ for a list of $d$ star vectors.
	According to \itref{Sd}, these star vectors span $U$.
	Therefore, the failing node is able to synthesize the restriction to
	$∂^U_f(\Vp⊗U⊗\Wq)$, which forms the entire coboundary.
	Hence \itref{Rc} holds.
	
	We are left with \itref{Rd} and $β$.
	Statement \itref{Rd} is equivalent to the correctness of \cref{equ:repair}.
	\Cref{equ:repair} (\cref{thm:repair}) holds because
	\begin{align*}
		&\kern-1em	ϕ（∂^U_f\(∇(ν⊗ω)\)）-ϕ（∂^U_f\(ν⊗ω\)）\\
		&\quad	=ϕ（∂^U_f\(∇(ν⊗ω)\)）-ϕ（∇\(∂^U_f(ν⊗ω)\)）\bytag{dia:PC}\\
		&\quad	=ϕ\(∂^V_f∇(ν⊗ω)+∂^W_f∇(ν⊗ω)\)-ϕ\(∇∂^V_f(ν⊗ω)+∇∂^W_f(ν⊗ω)\)\\
		&\quad	=ϕ\(∂^V_f∇(ν⊗ω)-∇∂^V_f(ν⊗ω)\)+ϕ\(∂^W_f∇(ν⊗ω)-∇∂^W_f(ν⊗ω)\)\\
		&\quad	=ϕ\((-1)^p∇(ν⊗v🔑f⊗ω)\)+ϕ\((-1)^pν⊗w🔑f⊗ω\)
				\steplabel{equ:homotopy}\\ 
		&\quad	=(-1)^pϕ(ν⊗v🔑f⊗ω\)+(-1)^pϕ(ν⊗w🔑f⊗ω)\bytag{dia:PC}\\
		&\quad	=(-1)^pϕ(ν⊗u🔑f⊗ω).
	\end{align*}
	\Cref{equ:homotopy} is a consequence of
	\begin{align}
		∂^V_f\(∇(ν⊗ω)\)-∇\(∂^V_f(ν⊗ω)\) &= (-1)^p∇(ν⊗v🔑f⊗ω),\\
		∂^W_f\(∇(ν⊗ω)\)-∇\(∂^W_f(ν⊗ω)\) &= (-1)^pν⊗w🔑f⊗ω.
	\end{align}
	They are proven in \cref{lem:homotopy}.
	Now \itref{Rd} (\cref{thm:repair}) is complete.

\subsubsection{The bandwidth $β$}\label{sec:bandwidth4}

	Now we calculate $β$;
	$β$ is the sum of the dimension of the coboundaries
	$∂^U_f(\Vp⊗u🔑h⊗\Wq)$ for all $p+q=s-2=2$.
	To (over)estimate this dimension, let $W^⟂_f⊆W$ be a complement
	of $\spa⟨w🔑f⟩$ in $W$, that is, $W^⟂_f⊕\spa⟨w🔑f⟩≅W$.
	We claim a lemma.
	
	\begin{lem}[compress help]\label{lem:bandwidth}
		Both
		\[∑_{p+q=s-2}∂^U_f(\Vp⊗u🔑h⊗\Wq)⊆∑_{p+q=s-2}∂^U_f(\Vp⊗u🔑h⊗\Wq^⟂_f)
			\eqlabel{con:nof}\]
		and
		\[\dim（†right-hand side of \eqref{con:nof}†）
			≤∑_{p+q=s-2}(d-k)^p\bi{k-1}q\eqlabel{ine:lessk}\]
		hold.
	\end{lem}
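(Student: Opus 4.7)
The plan is to establish the containment \eqref{con:nof} by decomposing each wedge power $\Wq$ according to the direct sum $W = W^\perp_f \oplus \langle w^\star_f\rangle$, and then to derive the dimension bound \eqref{ine:lessk} by a routine count.

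For the containment, every $\omega \in \Wq$ decomposes uniquely as $\omega_1 + \omega_2 \wedge w^\star_f$ with $\omega_1 \in \Lambda^q W^\perp_f$ and $\omega_2 \in \Lambda^{q-1} W^\perp_f$, because $\Lambda^q W = \Lambda^q W^\perp_f \oplus (\Lambda^{q-1} W^\perp_f \wedge w^\star_f)$. Applied to a rank-one tensor $\nu \otimes u^\star_h \otimes \omega$, the operator $\partial^U_f$ splits accordingly. The $\omega_1$-piece already lies in $\partial^U_f(\Vp \otimes u^\star_h \otimes \Lambda^q W^\perp_f)$, a summand on the right-hand side of \eqref{con:nof}. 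For the $\omega_2 \wedge w^\star_f$-piece, I would rewrite $\nu \otimes u^\star_h \otimes \omega_2 \wedge w^\star_f$ as $\pm\,\partial^W_f(\nu \otimes u^\star_h \otimes \omega_2)$ via \eqref{for:Wq}, then apply $\partial^U_f$ and invoke the identity $\partial^U_f\partial^W_f = -\partial^U_f\partial^V_f$, which is equivalent to $(\partial^U_f)^2 = 0$ (see \eqref{for:U+}). Because $\partial^V_f$ acts only on the V-segment and leaves $\omega_2 \in \Lambda^{q-1} W^\perp_f$ intact, the result lands in $\partial^U_f(\V{p+1} \otimes u^\star_h \otimes \Lambda^{q-1} W^\perp_f)$, which is the $(p+1, q-1)$-summand on the right-hand side. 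This index is valid because $(p+1)+(q-1) = s-2$; the transfer step requires $q \geq 1$, but the $q = 0$ case is trivial since then $\Lambda^q W = \Lambda^q W^\perp_f$ to begin with.

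For the dimension bound, $\dim W^\perp_f = k-1$ yields $\dim \Lambda^q W^\perp_f = \binom{k-1}{q}$, hence $\dim(\Vp \otimes u^\star_h \otimes \Lambda^q W^\perp_f) = (d-k)^p \binom{k-1}{q}$. A linear map cannot increase dimension, and the dimension of a sum of subspaces is at most the sum of the individual dimensions, so \eqref{ine:lessk} follows immediately. The main obstacle I foresee is keeping the signs straight through the $\partial^W_f$-rewrite and the $\partial^W_f \leftrightarrow \partial^V_f$ swap; fortunately, the containment itself is insensitive to signs, so any consistent convention suffices. The substantive content of the argument is the single identity $\partial^U_f\partial^W_f = -\partial^U_f\partial^V_f$, which lets an obstructing $w^\star_f$-factor in the W-segment be traded for an extra $v^\star_f$-factor in the V-segment, pushing everything into the $\Lambda^q W^\perp_f$-indexed summands at the cost of raising the V-degree.
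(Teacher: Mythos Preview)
Your proposal is correct and follows essentially the same route as the paper: decompose $\omega$ along $W = W^\perp_f \oplus \langle w^\star_f\rangle$, observe that the $W^\perp_f$-part is already on the right-hand side, and for the $w^\star_f$-part rewrite via $\partial^W_f$ and trade $\partial^U_f\partial^W_f$ for $-\partial^U_f\partial^V_f$ using \eqref{for:U+}, landing in the $(p+1,q-1)$-summand. The dimension bound is handled identically.
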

	\begin{proof}
		To show \cref{ine:lessk}, one realizes that
		each $\Vp⊗u🔑h⊗\Wq^⟂_f$ is of dimension $(d-k)^p\binom{k-1}q$.
		Now sum over $p+q=s-2$.
		\Cref{ine:lessk} is done.
		
		To show \cref{con:nof}, it suffices to show that every coboundary
		of this form is in the right-hand side: $∂^U_f(ν⊗u🔑h⊗ω)$ for
		all possible $p,q≥0$ such that $p+q=s-2$, all $ν∈\Vp$, and all $ω∈\Wq$.
		To do so, we first decompose the $\Wq$-segment
		of the argument $ν⊗u🔑h⊗ω$ according to
		the direct sum decomposition $W≅W^⟂_f⊕\spa⟨w🔑f⟩$.
		In details, $ω$ can be rewritten as $ω^♮+ω^♭∧w🔑f$
		for some $ω^♮∈\Wq^⟂_f$ and some $ω^♭∈\W{q-1}^⟂_f$.
		Clearly $∂^U_f(ν⊗u🔑h⊗ω^♮)$ lies
		in the right-hand side of \cref{con:nof}.
		It remains to show that so does $∂^U_f(ν⊗u🔑h⊗ω^♭∧w🔑f)$.
		Rewrite it
		\begin{align*}
			∂^U_f(ν⊗u🔑h⊗ω^♭∧w🔑f)
			&=	(-1)^{q-1}∂^U_f\(∂^W_f(ν⊗u🔑h⊗ω^♭)\)\bytag{for:Wq}\\
			&=	(-1)^q∂^U_f\(∂^V_f(ν⊗u🔑h⊗ω^♭)\).\bytag{for:U+}
		\end{align*}
		Argument $∂^V_f(ν⊗u🔑h⊗ω^♭)$ is in $\V{p+1}⊗u🔑h⊗\W{q-1}^⟂_f$,
		hence $∂^U_f\(∂^V_f(ν⊗u🔑h⊗ω^♭)\)$ is in the right-hand side.
		This completes \cref{con:nof}.
	\end{proof}
	
	Finally we conclude that $β$ is (at most)
	the right-hand side of \cref{ine:lessk} plugged in $s=4$.
	It is the $x^4$-\coef/ of
	\[*÷{x^2(1+x)^{k-1}}{1-(d-k)x}.\copytag{fun:beta}\]*
	For general size $s$, it is \cref{for:beta}
	and the $x^s$-\coef/ of \cref{fun:beta}.
	This demonstrates the proof of \cref{thm:moulin,pro:ogf} for the $s=4$ case.
	
	Remark:
	\Cref{con:nof} holds with equality due to $W^⟂_f⊆W$.
	Supposedly, \cref{ine:lessk} holds with equality.
	Otherwise the true $β$ will be less than the estimate above,
	which implies that moulin code will be better than cascade code.
	That \cref{ine:lessk} holds with equality also means that
	the right-hand side of \cref{con:nof} is a direct sum.

讀
\section
							  General Code Summary  							

\label{sec:general}

	This section summarizes the general moulin construction.
	First, \cref{dia:play4} extends to a larger diagram.
	\[\let\ARCN\ARCM\let\ARCM\relax
	\CD[skew 3=-7.5]{
		\W s\ARCN &\VW1{s-1}      &⋯\ARCN & \VW{p+1}q &⋯\ARCN &\V s           \\
		          &U⊗\W{s-1}\ARDS &⋯      & \VUW pq   &⋯      &\V{s-1}⊗U\ARDS \\
		          &W⊗\W{s-1}      &⋯      & \VWW pq   &⋯      &\V{s-1}⊗W      
	}\eqlabel{dia:play}\]
	Spaces in a column from a direct-sum relationship,
	i.e., \cref{dia:directsum}.
	Spaces connected by a $∇$-arrow form a \CM/ relationship,
	i.e., \crefrange{for:N0}{for:Nq}.
	See \cref{sec:gallery} for more instances of diagrams for various~$s$.

\subsection{File format and \PM M$M$}

	The file is in the following space
	\[*ϕ∈（⨁_{p+q=s-1}\Vp⊗U⊗\Wq）ˇ≅⨁_{p+q=s-1}(\Vp⊗U⊗\Wq)ˇ.\]*
	Cf. \cref{spa:file4}.
	The file satisfies some parity checks.
	The general check is $ϕ(ν⊗ω)=ϕ\(∇(ν⊗ω)\)$ for all possible $p≥1$ and $q≥0$
	such that $p+q=s-1$, all $ν∈\Vp$, and all $ω∈\W{q+1}$.
	Cf.\ \cref{dia:PC}.
	The root check is $0=ϕ\(∇(ω)\)$ for all $ω∈\W s$.
	Cf.\ \cref{dia:RC}.
	The leaf check is $ϕ(ν)=0$ for all $ν∈\V s$.
	Cf.\ \cref{dia:LC}.
	
	The file size $M$ is the total dimension of the U-spaces
	minus the total dimension of the V-spaces.
	The former is
	\[*∑_{p+q=s-1}d(d-k)^p\bi kq=[x^s]÷{dx(1+x)^k}{1-(d-k)x}.\]*
	Cf.\ \cref{fun:Uspace}.
	The latter is
	\[*∑_{p+q=s}(d-k)^p\bi kq=[x^s]÷{(1+x)^k}{1-(d-k)x}.\]*
	Cf.\ \cref{fun:Vspace}.
	The difference is \cref{for:em}, the $x^s$-\coef/ of \cref{fun:em}.

\subsection{Node configuration and \PM\U03B1+$α$}

	For every $h∈[n]$, the $h$th node selects $u🔑h∈U$.
	The selection is such that:
	\itlabel{Sd} any $d$ star vectors span $U$; and
	\itlabel{Sk} any $k$ star vectors span $U/V$.
	For every $h∈[n]$, the $h$th node stores
	the restriction of $ϕ$ to this subspace
	\[*⨁_{p+q=s-1}\Vp⊗u🔑h⊗\Wq⊆⨁_{p+q=s-1}\Vp⊗U⊗\Wq.\]*
	Cf.\ \cref{spa:node4}.
	The node capacity $α$ is the dimension of the left-hand side.
	It is \cref{for:alpha}, which is the $x^s$-\coef/ of \cref{fun:alpha}.

\subsection{Download scheme}

	\Cref{fig:download} summarizes the argument better than any text does.
	Text:
	What we download is $ϕ↾\Vp⊗\Ws⊗\W{q+1}$
	for all $p≥0$ and $q≥-1$ such that $p+1+q=s-1$,
	where $\Ws$ is the span of the key vectors of the nodes we download from.
	The base case is, from $ϕ↾\V{s-1}⊗\Ws⊗\W0$ (we downloaded this),
	$ϕ↾\V s=ϕ↾\V{s-1}⊗V$ (the leaf check), and \itref{Sk},
	we acquire $ϕ↾\V{s-1}⊗U⊗\W0$.
	The induction hypothesis is we know $ϕ↾\V{p+1}⊗U⊗\Wq$ for some $q≥0$.
	To proceed, use $ϕ↾\V{p+1}⊗W⊗\Wq$ and the parity check
	to infer $ϕ↾\Vp⊗V⊗\W{q+1}$.
	Use what we downloaded $ϕ↾\Vp⊗\Ws⊗\W{q+1}$ and \itref{Sk}
	to acquire $ϕ↾\Vp⊗U⊗\W{q+1}$.
	The latest restriction is the induction hypothesis for the next step.
	Therefore, we can acquire $ϕ↾\Vp⊗U⊗\W{q+1}$ for all $p+q+1=s-1$.
	Hence we can comprehend $ϕ$.
	
	The key idea in the downloading scheme can be rephrased as follows.
	
	\begin{thm}[downloading scheme]
		For any subspace $W^⋆⊆U$ such that $V+W^⋆=U$,
		\[*\operatorname{image}(\operatorname{id}-∇)+⨁_{p+q=s-1}\Vp⊗W^⋆⊗\Wq
			=⨁_{p+q=s-1}\Vp⊗U⊗\Wq.\]*
	\end{thm}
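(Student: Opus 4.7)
The plan is to prove the nontrivial inclusion $⊇$; the reverse is automatic since both summands on the left already sit inside the direct sum on the right. Abbreviate $T_{p,q}≔\Vp⊗U⊗\Wq$ and $S_{p,q}≔\Vp⊗W^⋆⊗\Wq$, and write $T$, $S$, $I$ for the right-hand direct sum, the left-hand direct sum, and $\operatorname{image}(\operatorname{id}-∇)$, respectively. Here ``$\operatorname{id}$'' is understood as the collection of inclusions $\iota_V$ sending each V-space $\V{p+1}⊗\Wq$ into the U-space $T_{p,q}$ via $V↪U$ in the rightmost $V$-slot (including the leaf $\V s↪T_{s-1,0}$). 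The hypothesis $V+W^⋆=U$ yields
\[*T_{p,q}=\iota_V(\V{p+1}⊗\Wq)+S_{p,q},\]*
so the task reduces to showing $\iota_V(\V{p+1}⊗\Wq)⊆I+S$ for every $p,q≥0$ with $p+q=s-1$.

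I would carry this out by downward induction on $p$, mirroring the schedule in \cref{fig:download}. In the base case $p=s-1$, $q=0$, the V-space is $\V s$; since $∇$ is not defined on it, the entire contribution of $\V s$ to $\operatorname{id}-∇$ is $\iota_V(\V s)$ itself (this is the leaf check \cref{dia:LC}), which sits in $I$ directly. For the inductive step with $q≥1$, the map $∇$ sends $\V{p+1}⊗\Wq$ into $\V{p+1}⊗W⊗\W{q-1}⊆T_{p+1,q-1}$, and the induction hypothesis at $(p+1,q-1)$ has already placed this inside $I+S$. The parity check \cref{dia:PC} then supplies $\iota_V(x)-∇(x)∈I$ for every $x∈\V{p+1}⊗\Wq$; rearranging gives $\iota_V(x)∈I+S$ as well.

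The only real obstacle is index bookkeeping: one must verify that the V-space $\V{p+1}⊗\Wq$ keyed to the index $(p,q)$ does feed into the U-space $T_{p+1,q-1}$ under $∇$ (with $p+1$ strictly larger and $q-1$ strictly smaller), so that the induction hypothesis is available at an earlier term. As a by-product of this bookkeeping, one sees that the root-check image of $\W s$ inside $I$ is not needed for downloading; it is consumed only by the repairing scheme.
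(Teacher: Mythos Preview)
Your proposal is correct and matches the paper's approach. The paper does not give a separate proof for this theorem; it states the theorem as a vector-space reformulation of the download scheme laid out immediately before (and illustrated for $s=4$ in \cref{fig:download}), which is precisely your downward induction on $p$: start from the leaf check $\iota_V(\V s)\subseteq I$, then at each step use the already-established $T_{p+1,q-1}\subseteq I+S$ to absorb $\nabla(x)$ and combine with $\iota_V(x)-\nabla(x)\in I$.
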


\subsection{Repair scheme and \PM\U03B2+$β$}

	The explanation in \cref{sec:repair4} applies to general $s$.
	Recap:
	The help message from the $h$th node is the restriction to
	\[*∂^U_f(\Vp⊗u🔑h⊗\Wq)\copytag{spa:help}\]*
	for all possible $p,q≥0$ such that $p+q=s-2$.
	The failing node regain $ϕ(ν⊗u🔑f⊗ω)$ by computing
	\[*ϕ（∂^U_f\(∇(ν⊗ω)\)）-ϕ（∂^U_f\(ν⊗ω\)）=(-1)^pϕ(ν⊗u🔑f⊗ω)
		\copytag{equ:repair}\]*
	for all $ν∈\Vp$, and all $ω∈\Wq$.
	
	To calculate $β$, we recall
	\[*∑_{p+q=s-2}∂^U_f(\Vp⊗u🔑h⊗\Wq)⊆∑_{p+q=s-2}∂^U_f(\Vp⊗u🔑h⊗\Wq^⟂_f).
		\copytag{con:nof}\]*
	That the containment holds is proven as part of \cref{lem:bandwidth}.
	Abstract summary: for every $ν⊗u⊗ω$, the $ω$-segment can be decomposed into
	a $w🔑f$-free part ($ω^♮$) and an incident part ($ω^♭∧w🔑f$).
	The former is in the right-hand side.
	The latter is in the image of $∂^W_f$.
	And then we rewrite $∂^U_f∂^W_f$ as $-∂^U_f∂^V_f$
	to show that it is again in the right-hand side.
	The bandwidth $β$ is the dimension of the right-hand side,
	which is (bounded from above by) \cref{for:beta}
	and the $x^s$-coefficient of \cref{fun:beta}.
	
	At this point, we finish the proof of \cref{pro:ogf}.
	We almost finish proving \cref{thm:moulin}
	except that the field $𝔽$ and its size is unclear.
	It is covered in the next subsection.

\subsection{Field size}\label{sec:field}

	The last piece of \cref{thm:moulin} concerns the field size $\abs{𝔽}$.
	Throughout the paper, there was only one factor
	that (potentially) limits which field $𝔽$ can be.
	That is the conditions \itref{Sd} and \itref{Sk}.
	\itref{Sd} states that we must be able to find $n$ vectors in $𝔽^d$
	such that any $d$ of them are linearly independent.
	\itref{Sk} states that, when projected onto a $k$-dimensional
	quotient space $U/V$, any $k$ of them are linearly independent.
	
	To fulfill the nested requirements, consider Reed--Solomon codes.
	Let every node choose a unique star element $a🔑h∈𝔽$;
	and let $u🔑h$ be $[1\ a🔑h\ (a🔑h)^2\ ⋯\ (a🔑h)^{d-1}]$.
	Then \itref{Sd} is satisfied.
	We choose $V$ to be spanned by the last $d-k$ components of $𝔽^d$.
	Thus $w🔑h$ becomes the first $k$ components of $u🔑h$,
	which is $[1\ a🔑h\ (a🔑h)^2\ ⋯\ (a🔑h)^{k-1}]$.
	Hence \itref{Sk} is satisfied, too.
	Since there are $\abs{𝔽}$ distinct elements to choose from,
	a field size of $n$ is sufficient.
	This finishes the proof of \cref{thm:moulin}.
	
	Finding vectors that fulfill \itref{Sk} alone
	is equivalent to finding an $[n,k]$-\MDS/ code.
	For the search of \MDS/ codes, it is known that
	if $\abs{𝔽}<k$ then $n≤k+1$, where the equality is achieved by
	the identity matrix augmented by the all-one vector.
	For if $\abs{𝔽}≥k$, the \MDS/ conjecture states that $n≤\abs{𝔽}+2$.
	In conclusion, either
	we focus on the $n=k+1$ case and enjoy a field size as small as $2$,
	or we enlarge the field linearly along with $n$.
	The latter strategy is what \cref{thm:moulin} does.
	Note that the lower bound $\abs{𝔽}≥n-2$ is effective
	within the territory of our code and cascade codes.
	It does not say anything about the field size
	of other possible ERRC constructions.
	
	Finally, we utilize an augmented identity matrix
	to construct ERRCs for the $n=d+1=k+1$ case.
	The precise statement is made into the next proposition.
	This code happens to be a layered code \cite{TSAVK15}.
	
	\begin{pro}[layered code]
		For any integers $k$ and $s$ such that $k≥s-1≥1$,
		there exists a $(k+1,k,k,α,β,M)$-ERRC with
		\begin{align*}
			α &= \bi k{s-1},\\
			β &= \bi{k-1}{s-2},\rlap{\qquad and}\\
			M &= k\bi k{s-1}-\bi ks
		\end{align*}
		over any field.
	\end{pro}

讀
\section
						    Repair Multiple Failures    						

\label{sec:multiple}

	Assume general $n$, $k$, $d$, and $s$;
	that is, $n-1≥d≥k≥s-1≥1$.
	There are two models that measure the cost of repairing multiple failures.
	We elaborate on them in the next subsection.
	We focus on the centralized model in the future subsections.
	As an example, we first attempt to repair two failing nodes.
	And then we generalize to $c≤n-d$ failures.
	A similar analysis was conducted for the $k=d$ case in \cite{EM19d}.
	Throughout the section, $h$ means any of the helper indices.

\subsection{Centralized and cooperative models}

	In general, failures separate in time.
	But there may be circumstances where multiple nodes fail at once.
	\Cref{dfn:regenerate} does not cover the case
	when there are more nodes to be repaired.
	What \cref{dfn:regenerate} guarantees is that,
	A, so far as there are $k$ healthy nodes left, the file is safe.
	B, if there are $d$ healthy nodes left,
	one may call the repairing protocol for each and every failing node.
	This does not capture how efficient the repairing can be done.
	For that, two definitions are made in \cite{CJMRS13, SH13},
	and related in \cite{YB19}.
	
	\begin{dfn}\label{dfn:cooperative}
		Let there be $c$ failing nodes and $d$ helper nodes.
		The \emph{centralized (total) bandwidth} $γ\ce(c,d)$ is the total number
		of symbols the $d$ helper nodes send to a central agent who,
		after gathering all help messages, will repair the failures.
		The \emph{cooperative (total) bandwidth} $γ\co(c,d)$ is
		how many symbols are sent over the network,
		from a helping node or a failing one, that contribute to repairing.
	\end{dfn}
	
	See \cref{fig:bee} for illustration.
	Note that we may as well normalize the total bandwidth
	by the number of helper, or failing, nodes.
	One reason for doing so is to compare total bandwidths
	with the repair bandwidth in \cref{dfn:regenerate}.
	Particularly, $γ\ce(1,d)=γ\co(1,d)=dβ$.
	We analyse our code's performance under the centralized model.
	In doing so, our focus is on individual helper nodes.
	Hence we use the per-helper bandwidth $β_c≔γ\ce(c,d)/d$ to benchmark.
	In particular, $β_1=β$.
	
	\begin{figure}
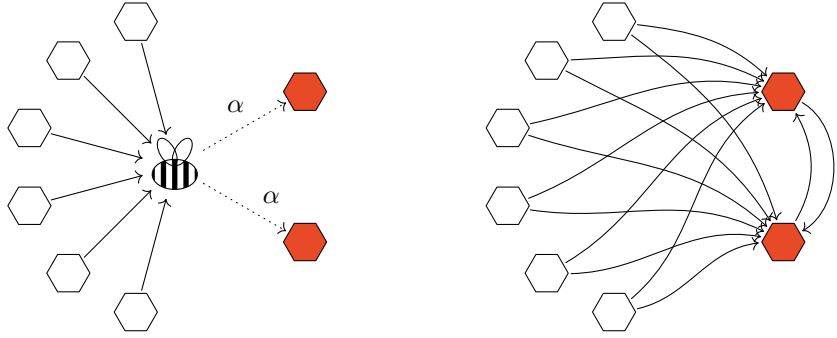

		$$\tikz[shorten both=1]{
			\draw(0,0)node[circle,inner sep=8](bee){};
			\draw foreach\d in{1,...,6}{
				(\d*30+75:2)苯(D\d){}(D\d)edge[->](bee)
			};
			\iflabor
			\begin{scope}
				\draw(-.1,.2)ellipse[x radius=.1,y radius=.2,rotate=30];
				\draw(.1,.2)ellipse[x radius=.1,y radius=.2,rotate=-30];
				\draw[clip](0,-.1)ellipse(.3 and .2);
				\draw[line width=2,scale=.15]
					foreach\x in{-3,...,3}{(\x,-2)--+(0,4)};
			\end{scope}
			\draw foreach\c in{1,...,2}{
				(\c*60-90:2)苯(C\c)[fill=UCO]{}
				(bee)edge[->,dotted]node[pos=.6,auto]{$α$}(C\c)
			};
			\fi
		}\hskip6em
		\tikz[shorten both=1]{
			\iflabor
			\draw foreach\d in{1,...,6}{
				(\d*30+75:2)苯(D\d){}(D\d)
				edge[->,bend left=21-6*\d](C1)edge[->,bend left=21-6*\d](C2)
			};
			\draw foreach\c in{1,2}{(\c*60-90:2)苯(C\c)[fill=UCO]{}};
			\draw(C1)edge[->,bend left=-30](C2)(C2)edge[->,bend left=60](C1);
			\fi
		}$$
		\caption{
			To the left: centralized model.
			Healthy nodes send help messages to an agent.
			The total bandwidth $γ\ce(2,6)$ is
			the number of symbols passing solid lines.
			To the right: cooperative model.
			Healthy nodes send help messages directly to the failing nodes,
			while the latter can help each other.
			The total bandwidth $γ\co(2,6)$ is
			the number of symbols passing solid lines.
		}\label{fig:bee}
	\end{figure}

\subsection{Repair two failures}

	To repair two nodes, $f$ and $g$,
	it suffices to send out the restrictions to coboundaries
	of the form $∂^U_f(\Vp⊗u🔑h⊗\Wq)$ and $∂^U_g(\Vp⊗u🔑h⊗\Wq)$.
	The main issue here is to count the dimension (aka the total bandwidth).
	
	Recall the motivation of $W^⟂_f$ is to ``avoid'' $w🔑f$.
	Let $W^⟂_{fg}$ be a subspace avoiding $w🔑f$ and $w🔑g$.
	That is, $W^⟂_{fg}$ is such that $W^⟂_{fg}⊕\spa⟨w🔑f,w🔑g⟩≅W$.
	Redefine $W^⟂_f$ as $W^⟂_{fg}⊕\spa⟨w🔑g⟩$;
	note that $W^⟂_f⊕\spa⟨w🔑f⟩≅W$ still holds.
	We next claim a containment
	\begin{align*}
		&{}	∑_{p+q=s-2}∂^U_f(\Vp⊗u🔑h⊗\Wq)+∑_{p+q=s-2}∂^U_g(\Vp⊗u🔑h⊗\Wq)\\
		&\qquad	⊆∑_{p+q=s-2}∂^U_f(\Vp⊗u🔑h⊗\Wq^⟂_f)
			+∑_{p+q=s-2}∂^U_g(\Vp⊗u🔑h⊗\Wq^⟂_{fg})\steplabel{con:nofg}
	\end{align*}
	and an inequality
	\[\dim（†right-hand side of \eqref{con:nofg}†）
		≤∑_{p+q=s-2}(d-k)^p\bi{k-1}q+(d-k)^p\bi{k-2}q.\eqlabel{ine:2lessk}\]
	
	Overall strategy:
	Based on the idea of \cref{sec:bandwidth4},
	we again squeeze the helping subspace into a
	(seemingly smaller) subspace followed by counting the dimension.
	By squeezing, we look at the $\Wq$-segment and attempt
	to remove $w🔑f$ and $w🔑g$ as thoroughly as possible.
	Removing one star vector is done in \cref{sec:bandwidth4}.
	The main challenge here is to remove $w🔑f$ from the argument of $∂^U_g$.
	(Notice the incoherence in indices.)
	
	Actual verification:
	\Cref{ine:2lessk} is as straightforward as \cref{ine:lessk}.
	\Cref{con:nofg}, on the other hand, is more involved.
	It is a consequence of the following three containments---%
	\cref{con:nof}, \cref{con:nof} with all $f$'s replaced by $g$,
	and the one in the lemma below.
	
	\begin{lem}[compress joint help]\label{lem:compress}
		\begin{align*}
			&{}	∑_{p+q=s-2}∂^U_g(\Vp⊗u🔑h⊗\Wq^⟂_g)\\
			&\qquad	⊆∑_{p+q=s-2}∂^U_f(\Vp⊗u🔑h⊗\Wq)
				+∑_{p+q=s-2}∂^U_g(\Vp⊗u🔑h⊗\Wq^⟂_{fg}).\steplabel{con:nog}
		\end{align*}
	\end{lem}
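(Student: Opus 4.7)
The plan is to adapt the ``avoid $w^\star_f$'' compression trick from \cref{lem:bandwidth}, but now with a twist, since the derivative $\partial^U_g$ and the star vector $w^\star_f$ to be excised belong to different nodes. Fix a generating tensor $\partial^U_g(\nu\otimes u^\star_h\otimes \omega)$ of the left hand side of \cref{con:nog}, with $\nu\in V^p$ and $\omega\in \Lambda^q W^\perp_g$, $p+q=s-2$. Since $W^\perp_g = W^\perp_{fg}\oplus\operatorname{span}\langle w^\star_f\rangle$, decompose $\omega = \omega^\natural + \omega^\flat \wedge w^\star_f$ with $\omega^\natural \in \Lambda^q W^\perp_{fg}$ and $\omega^\flat\in \Lambda^{q-1}W^\perp_{fg}$. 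The contribution $\partial^U_g(\nu\otimes u^\star_h\otimes \omega^\natural)$ already sits inside $\partial^U_g(V^p\otimes u^\star_h\otimes \Lambda^q W^\perp_{fg})$, so the entire task reduces to placing $\partial^U_g(\nu\otimes u^\star_h\otimes \omega^\flat\wedge w^\star_f)$ in the right hand side of \cref{con:nog}.

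For that, I first rewrite $\nu\otimes u^\star_h\otimes \omega^\flat\wedge w^\star_f$ as a sign times $\partial^W_f(\nu\otimes u^\star_h\otimes \omega^\flat)$ using the definitions \cref{for:W0,for:Wq}. Then I invoke the splitting $\partial^W_f = \partial^U_f - \partial^V_f$, producing
\[\partial^U_g\partial^W_f(\nu\otimes u^\star_h\otimes \omega^\flat)
= \partial^U_g\partial^U_f(\nu\otimes u^\star_h\otimes \omega^\flat)
- \partial^U_g\partial^V_f(\nu\otimes u^\star_h\otimes \omega^\flat).\]
The first term, by the anti-commutativity $\partial^U_g\partial^U_f = -\partial^U_f\partial^U_g$, equals $-\partial^U_f\bigl(\partial^U_g(\nu\otimes u^\star_h\otimes \omega^\flat)\bigr)$. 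Since $\partial^U_g(\nu\otimes u^\star_h\otimes \omega^\flat)$ lies in $V^{p+1}\otimes u^\star_h \otimes \Lambda^{q-1}W \oplus V^p\otimes u^\star_h\otimes \Lambda^q W$, this contribution is absorbed by $\sum_{p+q=s-2}\partial^U_f(V^p\otimes u^\star_h\otimes \Lambda^q W)$, the first summand of the right hand side.

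The second term is the delicate one, and this is where the choice $\omega^\flat\in \Lambda^{q-1}W^\perp_{fg}$ (rather than merely $\Lambda^{q-1}W^\perp_g$) pays off. Since $\partial^V_f$ acts on the $V$-slot only, by \cref{for:V=} the $W$-slot is untouched, so $\partial^V_f(\nu\otimes u^\star_h\otimes \omega^\flat)$ remains in $V^{p+1}\otimes u^\star_h\otimes \Lambda^{q-1}W^\perp_{fg}$. Applying $\partial^U_g$ then lands the result in $\partial^U_g(V^{p+1}\otimes u^\star_h\otimes \Lambda^{q-1}W^\perp_{fg})$, which, with the shift $p\mapsto p+1,\ q\mapsto q-1$ still satisfying $(p+1)+(q-1)=s-2$, is a summand of the second sum on the right hand side.

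The main obstacle in the argument is really one of bookkeeping: seeing why one must remove $w^\star_f$ not only from the outer $W^\perp_g$ slot, but also after the $\partial^V_f$ step. That need is precisely the structural motivation for introducing the doubly-avoiding subspace $W^\perp_{fg}$ in the first place. Once this is set up, sign tracking and edge cases (notably $q=0$, where $\omega^\flat$ is vacuous and only $\omega^\natural$ survives) are routine, and the lemma follows; together with \cref{con:nof} applied to $f$ and to $g$, it yields \cref{con:nofg}, and \cref{ine:2lessk} then follows from a direct dimension count exactly as in \cref{ine:lessk}.
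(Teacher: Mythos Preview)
Your proof is correct and follows essentially the same route as the paper's: decompose $\omega=\omega^\natural+\omega^\flat\wedge w^\star_f$ along $W^\perp_g\cong W^\perp_{fg}\oplus\operatorname{span}\langle w^\star_f\rangle$, dispatch $\omega^\natural$ trivially, then rewrite the remaining piece via $\partial^W_f$, split $\partial^W_f=\partial^U_f-\partial^V_f$, and use the anti-commutativity $\partial^U_g\partial^U_f=-\partial^U_f\partial^U_g$ for the first term and the fact that $\partial^V_f$ leaves the $W$-slot untouched for the second. Your added remarks on why $W^\perp_{fg}$ (rather than $W^\perp_g$) is needed and on the $q=0$ edge case are welcome elaborations of points the paper leaves implicit.
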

	\begin{proof}
		Let $∂^U_g(ν⊗u🔑h⊗ω)$ be in the left-hand side, where $ω∈\Wq^⟂_g$.
		Now we decompose $ω$ according to $W^⟂_g≅W^⟂_{fg}⊕\spa⟨w🔑f⟩$.
		That is, $ω=ω^♮+ω^♭∧w🔑f$
		for some $ω^♮∈\Wq^⟂_{fg}$ and some $ω^♭∈\W{q-1}^⟂_{fg}$.
		Clearly $∂^U_g(ν⊗u🔑h⊗ω^♮)$ is in the right-hand side.
		It remains to fit $∂^U_g(ν⊗u🔑h⊗ω^♭∧w🔑f)$ in.
		To do so, rewrite the coboundary
		\begin{align*}
			∂^U_g(ν⊗u🔑h⊗ω^♭∧w🔑f)
			&=	(-1)^{q-1}∂^U_g∂^W_f(ν⊗u🔑h⊗ω^♭)\\
			&=	(-1)^{q-1}∂^U_g∂^U_f(ν⊗u🔑h⊗ω^♭)+(-1)^q∂^U_g∂^V_f(ν⊗u🔑h⊗ω^♭)\\
			&=	(-1)^q∂^U_f∂^U_g(ν⊗u🔑h⊗ω^♭)+(-1)^q∂^U_g∂^V_f(ν⊗u🔑h⊗ω^♭).
		\end{align*}
		Both terms in the last line
		are in the right-hand side of \cref{con:nog}.
		And we are done proving \cref{con:nofg}.
	\end{proof}
	
	In conclusion, this subsection proves that in order to
	repair two failing nodes at once, each of $d$ helper nodes sends out
	\[β_2=∑_{p+q=s-2}(d-k)^p\bi{k-1}q+∑_{p+q=s-2}(d-k)^p\bi{k-2}q
		\eqlabel{for:beta2}\]
	symbols.
	This quantity is the $x^s$-\coef/ of 
	\[÷{x^2(1+x)^{k-1}}{1-(d-k)x}+÷{x^2(1+x)^{k-2}}{1-(d-k)x}
		=x·÷{(1+x)^k-(1+x)^{k-2}}{1-(d-k)x}\eqlabel{fun:beta2}.\]

\subsection{Repair three failures}

	This subsection is similar to the previous.
	We bridge the gap between repairing two nodes
	and repairing arbitrary many nodes.
	
	Clearly the help messages are going to be
	the restrictions to coboundaries of the form
	$∂^U_e(\Vp⊗u🔑h⊗\Wq)$, $∂^U_f(\Vp⊗u🔑h⊗\Wq)$, and $∂^U_g(\Vp⊗u🔑h⊗\Wq)$,
	where $e,f,g$ are failing indices.
	The one and only problem is to bound the dimension from above.
	We claim that
	\[*∑∂^U_e(\Vp⊗u🔑h⊗\Wq)+∂^U_f(\Vp⊗u🔑h⊗\Wq)+∂^U_g(\Vp⊗u🔑h⊗\Wq)\]*
	is contained in
	\[*∑∂^U_e(\Vp⊗u🔑h⊗\Wq^⟂_e)+∂^U_f(\Vp⊗u🔑h⊗\Wq^⟂_{ef})
		+∂^U_g(\Vp⊗u🔑h⊗\Wq^⟂_{efg}).\]*
	Both sums are over $p+q=s-2$.
	To better illustrate the argument, we write $Ω$ to mean $\Vp⊗u🔑h⊗\Wq$.
	Similarly, $Ω_e$, $Ω_{ef}$, and $Ω_{efg}$ mean $\Vp⊗u🔑h⊗\Wq^⟂_e$,
	$\Vp⊗u🔑h⊗\Wq^⟂_{ef}$ and $\Vp⊗u🔑h⊗\Wq^⟂_{efg}$, respectively.
	Then what we want to prove turns into
	\[∑_{p+q=s-2}∂^U_e(Ω)+∂^U_f(Ω)+∂^U_g(Ω)⊆∑_{p+q=s-2}
		∂^U_e(Ω_e)+∂^U_f(Ω_{ef})+∂^U_g(Ω_{efg}).\eqlabel{con:noefg}\]
	Cf.\ \cref{con:nof,con:nofg}.
	
	Now the pattern is more recognizable.
	The mission is to remove $w🔑e$ from all three,
	$w🔑f$ from the last two, and $w🔑g$ from the last one.
	To achieve that, we claim the following six auxiliary containments:
	\begin{align*}
		∑∂^U_e(Ω) &⊆ ∑∂^U_e(Ω_e),\steplabel{con:start}\\
		∑∂^U_f(Ω_f) &⊆ ∑∂^U_e(Ω)+∂^U_f(Ω_{ef}),\\
		∑∂^U_f(Ω) &⊆ ∑∂^U_f(Ω_f),\\
		∑∂^U_g(Ω_{fg}) &⊆ ∑∂^U_e(Ω_f)+∂^U_g(Ω_{efg}),\\
		∑∂^U_g(Ω_g) &⊆ ∑∂^U_f(Ω)+∂^U_g(Ω_{fg}),\rlap{\qquad and}\\
		∑∂^U_g(Ω) &⊆ ∑∂^U_g(Ω_g).\steplabel{con:end}
	\end{align*}
	All sums are over $p+q=s-2$.
	The first, the third, and the last are equivalent to \cref{con:nof}.
	The second and the fifth are equivalent to \cref{con:nog}.
	The fourth looks exceptional, but is not---%
	we intend to remove $w🔑e$ from a $\Wq$-segment that lacks $w🔑f$.
	Whether or not $W$ contains $w🔑f$ in the first place
	is remotely relevant to our intention.
	In other words, the fourth containment is equivalent to \cref{con:nog}
	with $f$ replaced by $e$ and $W$ replaced by $W^⟂_f$.
	Colloquially, the repairment of two failures covers the most difficult case.
	
	All six auxiliary containments verified,
	we now elucidate how they together imply \cref{con:noefg}.
	Define a sudoku game as follows.
	The involving subspaces are placed in a grid as seen in \cref{fig:sudoku}.
	The highlighted subspaces are what is \emph{known}.
	At every step, we \emph{learn} a subspace if its prerequisites are known.
	Learning is governed by two rules.
	Rule-\ref{con:nof}: if a subspace has no upper neighbor,
	its right neighbor is the prerequisite.
	Rule-\ref{con:nog}: for the remaining subspaces, its right neighbor and
	the farthest subspace above it are the prerequisites.
	See \cref{fig:learn}.
	\Cref{con:noefg} is equivalent to whether
	the sudoku game ends at all subspaces highlighted.
	And it clearly does.
	
	\tikzset{know/.list={1-1,1-2,1-3}}
	\begin{figure}
		$$\tikz[sudoku]{
			\foreach\y in{1,...,3}{
				\edef\xmax{\iflabor\numexpr\y+1\else1\fi}
				\foreach\x in{1,...,\xmax}{
					\node at(-\x,-\y)[subspace cell](\x-\y){$∂^U_\y(Ω_{
						\ifnum\x>\y\else\foreach\zz in{\x,...,\y}{\zz}\fi
					})$};
				}
			}
		}$$\caption{
			The sudoku game board for $c=3$.
			For simplicity, $e,f,g$ are replaced by $1,2,3$.
			The goal is to learn the entire board.
		}\label{fig:sudoku}
	\end{figure}
	
	\begin{figure}
		\tikzset{learn=2-1}
		$$\tikz[sudoku,shorten both=1]{
			\foreach\y in{1,...,3}{
				\edef\xmax{\iflabor\numexpr\y+1\else1\fi}
				\foreach\x in{1,...,\xmax}{
					\node at(-\x,-\y)[subspace cell](\x-\y){$∂^U_\y(Ω_{
						\ifnum\x>\y\else\foreach\zz in{\x,...,\y}{\zz}\fi
					})$};
				}
			}
			\iflabor
			\draw[->,overlay](1-1)to[bend right=60,looseness=1.5](2-1);
			\fi
		}\qquad
		\tikzset{know/.list={2-1},learn=2-3}
		\tikz[sudoku,shorten both=1]{
			\iflabor
			\foreach\y in{1,...,3}{
				\edef\xmax{\iflabor\numexpr\y+1\else1\fi}
				\foreach\x in{1,...,\xmax}{
					\node at(-\x,-\y)[subspace cell](\x-\y){$∂^U_\y(Ω_{
						\ifnum\x>\y\else\foreach\zz in{\x,...,\y}{\zz}\fi
					})$};
				}
			}
			\draw[->](2-1)to[bend right](2-3);
			\draw[->,overlay](1-3)to[bend left=60,looseness=1.5](2-3);
			\fi
		}$$\caption{
			To the left: rule-\ref{con:nof}.
			To the right: rule-\ref{con:nog}.
			Subspaces are expected to be learned in this order:
			$∂^U_1(Ω)$, $∂^U_2(Ω_2)$, $∂^U_2(Ω)$, $∂^U_3(Ω_{23})$,
			$∂^U_3(Ω_3)$, and finally $∂^U_3(Ω)$.
			This is the exact order
			\crefrange{con:start}{con:end} are given in.
		}\label{fig:learn}
	\end{figure}
	
	To sum up, this subsection proves that in order to
	repair three failing nodes at once, each of the $d$ helper nodes sends out
	\[β_3=∑_{p+q=s-2}(d-k)^p\bi{k-1}q+(d-k)^p\bi{k-2}q+(d-k)^p\bi{k-3}q
		\eqlabel{for:beta3}\]
	symbols.
	Cf.\ \cref{for:beta,for:beta2}.
	This quantity is the $x^s$-\coef/ of 
	\[÷{x^2(1+x)^{k-1}}{1-(d-k)x}+÷{x^2(1+x)^{k-2}}{1-(d-k)x}
		+÷{x^2(1+x)^{k-3}}{1-(d-k)x}=x·÷{(1+x)^k-(1+x)^{k-3}}{1-(d-k)x}.
		\eqlabel{fun:beta3}\]
	Cf.\ \cref{fun:beta,fun:beta2}.

\subsection{Repair many failures}

	Let there be $c$ failing nodes.
	If $c≥k$, then it is sufficient and necessary
	to send out the entire file to the agent.
	We know, from the axioms,
	that any $k$ out of $d$ helpers sending out $α$ symbols is enough.
	How to further reduce this number or balance the load among the $d$ helpers
	could be an interesting subject but is not addressed here.
	
	Consider the $c<k$ cases.
	Owing to the innate symmetry, assume the first $c$ nodes fail.
	Let $W^⟂≔W$;
	let $W^⟂_{ℐj}$ for $j∈[c]$ be such that
	$W^⟂_ℐ≅W^⟂_{ℐj}⊕\spa⟨w🔑j⟩$ for all $ℐ⊆[c]$ and all $j∈[c]、ℐ$.
	Informally, $W^⟂_ℐ$ is the space $W$ without $w🔑i$ for $i∈ℐ$.
	It could be rigorously implemented as
	$W^⟂_ℐ≔\spa⟨w🔑j:j∈[k]、ℐ⟩$.
	Let $Ω_ℐ$ be $\Vp⊗u🔑h⊗\Wq^⟂_ℐ$. 
	
	The containment we want is
	\[∑_{p+q=s-2}∑_{j=1}^c∂^U_j(Ω)⊆∑_{p+q=s-2}∑_{j=1}^c∂^U_j(Ω_{12\dotsm j}).
		\eqlabel{con:no12j}\]
	Cf.\ \cref{con:nof,con:nofg,con:noefg}. 
	The containment can be proven by a similar sudoku game.
	See \cref{fig:sudoku,fig:learn,fig:sudokularge}.
	The inequality we want is
	\[\dim（†right-hand side of \eqref{con:no12j}†）
		≤∑_{p+q=s-2}∑_{j=1}^c(d-k)^p\bi{k-j}q.\eqlabel{ine:betac}\]
	Cf.\ \cref{ine:lessk,ine:2lessk}.
	The right-hand side of \cref{ine:betac} is \cref{for:betamore}.
	(Apply Pascal's rule iteratively.)
	Cf.\ \cref{for:beta,for:beta2,for:beta3}.
	It is the $x^s$-\coef/ of
	\[*∑_{j=1}^c÷{x^2(1+x)^{k-j}}{1-(d-k)x}
		=x·÷{(1+x)^k-(1+x)^{k-c}}{1-(d-k)x},\]*
	which coincides with \cref{fun:betamore}.
	(Sum the geometric series.)
	Cf.\ \cref{fun:beta,fun:beta2,fun:beta3}.
	This finishes the proof of \cref{pro:more}.
	
	\tikzset{know/.list={1-4,1-5}}
	\begin{figure}
		$$\tikz[sudoku,xscale=1.2,yscale=1/1.2]{
			\foreach\y in{1,...,5}{
				\edef\xmax{\iflabor\numexpr\y+1\else1\fi}
				\foreach\x in{1,...,\xmax}{
					\node at(-\x,-\y)[subspace cell](\x-\y){$∂^U_\y(Ω_{
						\ifnum\x>\y\else\foreach\zz in{\x,...,\y}{\zz}\fi
					})$};
				}
			}
		}$$\caption{
			The sudoku game board for $c=5$.
			The goal is to learn the entire board,
			presumably in this order: from right to left, from top to bottom.
		}\label{fig:sudokularge}
	\end{figure}

讀
\section
								   Discussion   								

	While moulin code is ``numerically equivalent'' to cascade code,
	we argue that our codes possess (much) more structure.
	For one, our file format consists of a chain of spaces
	quotiented by the ``joints'';
	in contrast, cascade's file format is tree-like in nature.
	For two, the downloading scheme is an induction;
	the induction is driven by a single parameter $p$ (equivalently $q$)
	and uses merely direct sum's universal property.
	Three, the repairing rule is an explicit variable assignment
	(\cref{equ:repair}) instead of a segment-by-segment procedure.
	(The proof of the validity needs induction, but the algorithm does not.)
	Four, we analyze how moulin code withstands simultaneous node failures.
	This has been done only for $k=d$ determinant code \cite{EM19d}.
	For $k<d$ cascade code, however, it is easier claimed than proven.
	
	Equipped with structures, moulin can be used to
	backtrack which step of the outer bounds is not tight.
	This potentially helps tighten outer bounds
	and figure out the optimal $α/M$-versus-$β/M$ trade-off.

\appendix

讀
\section
					    Connection to Other Constructions   					

\label{sec:connection}

	We arrange eight recent works into \cref{tab:paradigm}.
	They are: layered code \cite{TSAVK15},
	improved layered code \cite{SSK15},
	concatenated layered code
	(using Johnson graph code as outer code) \cite{DL19},
	determinant code \cite{EM16d},
	cascade code \cite{EM19c},
	a multilinear algebra description of layered code \cite{LD17},
	an unpublished code by us,
	and moulin code discussed in this work.
	
	Codes in the combinatorics column are limited to the $n=d+1$ case,
	while the others apply to all $n≥d+1$ cases (with a price of larger fields).
	When $d$ is set to be $k$, the first column collapses into layered;
	the others degenerate to determinant.
	The former (layered) consists of \emph{layers};
	each layer is an $[s,s-1]$-\MDS/ code whose $s$ symbols
	are distributed across some $s$ nodes.
	The latter (determinant) consists of \emph{virtual} layers
	where the symbols are linearly combined and distributed to all nodes.
	
	Codes in the first row achieve the optimal trade-off when $k=d$.
	They (layered and determinant) are the building blocks
	of the other codes in the corresponding columns.
	Improved layered provides a rule of thumb to chain/nest layers
	such that overprotected fragments help insecure fragments.
	Codes in the third row specify a way to chain/nest
	their ``first-row primitives'' such that
	the overall $(α,β,M)$ reaches the best known possibilities.
	
	\pgfplotstableread[header=false]{
		~						combinatorics	Elyasi--Mohajer	multilinear			
		{$k=d$}					layered			determinant		\cite{LD17}			
		{$k≤d$~(suboptimal)}	improved		--				--					
		{$k≤d$~(conj.\ best)}	concatenated	cascade			unpublished/moulin	
	}\pgfParadigm
	\begin{table}
		\caption{
			Eight general regimes of ERRCs.
			By general we mean the targeted parameters
			are beyond the \MBR/ and \MSR/ points.
		}\label{tab:paradigm}
		\def\arraystretch{1.44}
		\centering\pgfplotstabletypeset[
			every head row/.style=output empty row,
			every row no 0/.style={before row=\toprule,after row=\midrule},
			every last row/.style={after row=\bottomrule},
			string type,
		]\pgfParadigm
	\end{table}
	
	This is how improved layered chains layers together:
	For each insecure (formerly \emph{under-accessed}) layer,
	count the degree of insecurity (corank/equivocation/deficit in protection),
	which is the number of absent symbols in this layer.
	For each absence of symbol, invoke an $s=1$ layer to back it up;
	an $s=1$ layer is always overprotected (formerly \emph{fully-accessed}).
	
	This is how concatenated layered chains layers together:
	For each insecure layer of size $s=t$,
	count the insecurity and let it be $r$.
	For each absence of symbol, back it up by a layer of size $s≤t-1-r$.
	Some flexibility lives within the choices of layer sizes.
	
	This is how cascade chains virtual layers together:
	For each insecure virtual layer of size $s=t$,
	let $r$ be the insecurity.
	Highlight the lexicographically last $r$ symbols in this virtual layer.
	For each of those symbols, prepare an $s≤t-1-r$ layer and
	constrain that the sum of the smaller layer is the highlighted symbol.
	
	This is how the unpublished chains virtual layers together:
	All virtual layers of size $s$ and insecurity $r$ span a vector space.
	We assign a subspace that is always
	linearly independent from the secure part.
	Call this subspace $𝘔(r)$,
	and call the protecting virtual layer $𝘋(t-1-r)$.
	Then the tensor product $𝘔(r)⊗𝘋(t-1-r)$ is the backup.
	In this case, one may say that
	there are (effectively) $\dim(𝘔(r))$ copies of $𝘋(t-1-r)$ in charge.
	That being said, it is hard to choose a basis of $𝘔(r)$
	in order to identify the copies;
	and we did not pursue such choice.
	The symmetry comes from the fact that $𝘔(r)$
	does not depend on any basis or star vector;
	in fact, $𝘔(r)$ was made the null space of some \WM/ map.
	
	All codes above chain (virtual) layers of parameters $(n,d,d)$.
	Moulin in this paper chains virtual layers of parameters $(n,k,k)$.
	Each \CM/ map one sees in \cref{dia:play4} is the parametrization
	space $\Vp$ tensored with the virtual layer $W⊗\Wq→\W{q+1}$.
	The parametrization space creates $\dim(\Vp)$ copies of
	the virtual layer and indexes them by tensors in $\Vp$.
	
	Another angle that reveals the difference between moulin and the others
	is how ogfs are obtained:
	In \cite[Fig.~4]{EM19c}, the numbers of determinant codes
	of virtual layer sizes $5,4,3,2,1$ are $1,0,3,2,9$, respectively.
	The latter sequence is a truncation of an indefinite one
	$1,0,3,2,9,12,31,54,\allowbreak\dotsc$
	generated by the ogf \cite{A053088} 
	\[*÷1{(1-2x)(1+x)^2}=÷1{(1-(d-k)x)(1+x)^{d-k}}.\]*
	For virtual layers of sizes $5,4,3,2,1$, they cost each node
	$\binom d4,\binom d3,\binom d2,\binom d1,\binom d0$ symbols, respectively.
	This is generated by
	\[*x(1+x)^d.\]*
	Therefore, the node capacity $α$ is the $x^5$-\coef/ of the product
	\[*÷1{(1-(d-k)x)(1+x)^{d-k}}·x(1+x)^d=÷{x(1+x)^k}{1-(d-k)x}.\]*
	Compare this to how we derived \cref{fun:alpha} in \cref{sec:node4}:
	\[*÷1{1-(d-k)x}·x(1+x)^k=÷{x(1+x)^k}{1-(d-k)x}.\]*

讀
\section
									 Gallery									

\label{sec:gallery}

	This appendix displays some siblings of \cref{dia:play4}---%
	the specialization of \cref{dia:play} to the first few sizes $s$.
	For each of them, the size is the number at the top left cell.
	For instance, $\V4$ at the top left cell means $s=4$.

\iflabor
	$s=1$.
	This is not useful as a standalone ERRC
	because everything vanishes by parity checks.
	Nonetheless, it is always a part of higher size moulin.
	And it protects.
	$$\CD[skew 3=]{
		\W1\ARCM & \V1          \\
		         & U\ARDS       \\
		         & \quad W\quad 
	}$$
	
	$s=2$.
	This is moulin at the \MBR/ point.
	This coincides with the product matrix construction at the \MBR/ point
	\cite[Section~IV]{RSK11}.
	$$\CD[skew 3=]{
		\W2\ARCM & \VW11      & \V2        \\
		         & U⊗\W1\ARDS & \V1⊗U\ARDS \\
		         & W⊗\W1       & \V1⊗W      
	}$$
	
	$s=3$.
	$$\CD[skew 3=]{
		\W3\ARCM & \VW12      & \VW21  & \V3        \\
		         & U⊗\W2\ARDS & \VUW11 & \V2⊗U\ARDS \\
		         & W⊗\W2      & \VWW11 & \V2⊗W      
	}$$
	
	$s=4$.
	This is the example \cref{dia:play4} we gave in the main text.
	$$\CD[skew 3=]{
		\W4\ARCM & \VW13      & \VW22  & \VW31  & \V4        \\
		         & U⊗\W3\ARDS & \VUW12 & \VUW21 & \V3⊗U\ARDS \\
		         & W⊗\W3      & \VWW12 & \VWW21 & \V3⊗W      
	}$$
	
	$s=5$.
	When $(k,d)=(4,6)$, this ERRC has the same parameters
	as the illustrative example in \cite[section~VII]{EM19c}.
	When $(n,k,d)=(8,4,7)$, this ERRC has the same parameters
	as in \cite[section~7]{DL19}.
	$$縮\CD[skew 3=-9]{
		\W5\ARCM & \VW14      & \VW23  & \VW32  & \VW41  & \V5        \\
		         & U⊗\W4\ARDS & \VUW13 & \VUW22 & \VUW31 & \V4⊗U\ARDS \\
		         & W⊗\W4      & \VWW13 & \VWW22 & \VWW31 & \V4⊗W      
	}縮$$
\fi

	Moulin at the \MSR/ point is \emph{not} shown here
	because it depends on $k$---it is the $s=k+1$ one.
	Note that $\W{k+1}$ and higher wedge powers vanish.
	So all $s≥k+1$ construction will simply be the same.

\iflabor
	The following diagram shows in what direction
	\CM/ $∇$ and coboundary operators map.
	South west arrows are $∂^W_w$ (for any $w∈W$).
	South east arrows are $∂^V_v$ (for any $v∈V$).
	Lightning arrows are compositions of
	$U↠V$ followed by $∇$ and finally $W↪U$.
	One can visualize \cref{lem:homotopy} in this diagram.
	$$縮\tikz{
		\tikzset{x={(-3cm,-4cm)},y={(4cm,-3cm)},scale=.4}
		\foreach\p in{0,...,5}{
			\foreach\q in{0,...,\numexpr5-\p}{
				\ifnum\numexpr\p+\q<5
					\edef\prefix{\ifnum\p>0\V\p⊗\fi}
					\edef\suffix{\ifnum\q>0⊗\W\q\fi}
					\draw(\q,\p)node(\q-\p){$\prefix U\suffix$};
					\ifnum\p>0
						\PMT\pp{\p-1}
						\draw[->](\q-\pp)--(\q-\p);
					\fi
					\ifnum\q>0
						\PMT\qq{\q-1}
						\draw[->](\qq-\p)--(\q-\p);
					\fi
				\fi
				\ifnum\p>0\ifnum\q>0
					\PMT\pp{\p-1}\PMT\qq{\q-1}
					\draw[->](\q-\pp)--(\q-.4,\p-.8)--(\q-.6,\p-.2)--(\qq-\p);
				\fi\fi
			}
		}
	}縮$$
	The diagram goes on endlessly but we stop at $s=5$.
\fi

\hbadness9999
\makeatletter
\g@addto@macro\sloppy{\advance\baselineskip0ptplus1ptminus1pt}
\let\($	\let\)$
\bibliographystyle{alphaurl}
\bibliography{MoulinAlgebra-3}

\end{document}